\documentclass{article} 
\usepackage{iclr2027_conference,times}

\usepackage{amsmath,amsfonts,bm}

\def\eqref#1{equation~\ref{#1}}

\def\1{\bm{1}}

\DeclareMathAlphabet{\mathsfit}{\encodingdefault}{\sfdefault}{m}{sl}
\SetMathAlphabet{\mathsfit}{bold}{\encodingdefault}{\sfdefault}{bx}{n}

\usepackage{hyperref}
\usepackage{url}
\usepackage{booktabs}
\usepackage{multirow}
\usepackage{tablefootnote}
\usepackage{wrapfig}
\usepackage[table]{xcolor}
\usepackage{etoolbox}

\usepackage{hyperref}
\usepackage{url}
\usepackage{mathtools}

\usepackage{amsmath,amssymb,amsthm}
\usepackage{algorithm,algpseudocode}
\usepackage{enumitem}
\usepackage{multirow} 

\usepackage{booktabs}
\usepackage{graphicx}
\usepackage{diagbox}
\usepackage{caption}
\usepackage{subcaption}

\usepackage{tcolorbox}

\usepackage{xparse}
\usepackage{etoolbox}

\usepackage{wrapfig}
\usepackage[normalem]{ulem}

\newtheorem{proposition}{Proposition}[section]

\newif\ifhighlight
\highlighttrue

\newif\ifarxiv
\arxivtrue      

\ifarxiv
    \iclrfinalcopy

    \makeatletter
    \patchcmd{\@maketitle}
      {\lhead{Published as a conference paper at ICLR 2027}}
      {\lhead{Preprint}}
      {}{}
    \makeatother
\fi

\title{Language Model Fingerprinting Requires \\Rethinking Watermark Teachers}

\author{
Jeongyeon Hwang$^{1}$,
Anshul Nasery$^{2}$,
Sewoong Oh$^{2}$,
Jungseul Ok$^{1}$\thanks{
Correspondence to: Jungseul Ok
\textless\texttt{jungseul@postech.ac.kr}\textgreater.
} \\
$^{1}$Pohang University of Science and Technology
\qquad
$^{2}$University of Washington
}

\begin{document}

\maketitle



\begin{abstract}
LLM fingerprinting via watermark distillation embeds a statistical watermark signal into model weights, enabling model owners to identify their models behind black-box APIs. Revisiting a recent protocol, we find that its utility evaluation understates text quality degradation in open-ended generation, favoring overly strong watermark teachers. Weakening the watermark improves text quality but sacrifices detectability. To move beyond this trade-off, we rethink whether text watermarks designed for verifying generated text are suitable distillation teachers for model fingerprinting. Such watermarks are typically designed to remain detectable from an individual output, limiting how sparse the watermark signal can be. In contrast, fingerprint verification can aggregate signal across queries, making sparser watermark signals viable. This raises a key question: where should the sparse signal be placed? We analyze signal placement through token surprisal and show that, even at comparable watermark strength, different placements can target tokens with different plausibility under the base model. This motivates near-tie restriction, which uses top-1-relative logit gaps to restrict the watermark bias to tokens close to the base model's top prediction. Across multiple models, near-tie improves detection--quality frontiers under deployment changes, preserves higher text quality across query budgets, and further improves existing watermarking schemes when combined with them.
\end{abstract}

\vspace{-1em}
\section{Introduction}\label{sec:introduction}
\vspace{-.5em}
The strong performance and widespread adoption of large language models (LLMs) have intensified the competition to develop increasingly capable models, with a growing number of such models released with open weights~\citep{minaee2024large,qwen2025qwen25}. Once released, however, these models may be deployed behind black-box APIs without honoring restrictive licenses, raising concerns about model ownership and unauthorized use~\citep{white2024model,xu2024instructional}. 
To build a healthy ecosystem, there is a growing need for reliable model provenance verification.

Recently, black-box model fingerprinting methods~\citep{xu2024instructional,russinovich2024heythatsmodelintroducing,nasery2026scalable,gloaguen2026llm} have been proposed for this purpose. They broadly fall into backdoor-based~\citep{xu2024instructional,russinovich2024heythatsmodelintroducing,nasery2026scalable} and watermark-based~\citep{gloaguen2026llm} approaches. In this work, we focus on watermark-based fingerprinting, which uses a text watermark originally designed to verify the provenance of generated text~\citep{kirchenbauer2023watermark} as a distillation teacher. Its watermarked output distribution is distilled into model weights on a designated fingerprint domain, such as French, so that queries from this domain elicit a detectable statistical signal. Compared with backdoor-based approaches, this framework offers three advantages: ($i$) statistical detection with false-positive control, ($ii$) natural-looking outputs that are challenging to detect and filter out, and ($iii$) robustness to deployment transformations such as decoding variations, quantization, pruning, and fine-tuning.

Despite these advantages, we find that the utility evaluation of a recent protocol substantially understates degradation in open-ended generation within the fingerprint domain.
This favors overly strong watermark teachers that degrade generation quality for legitimate users in that domain. Reducing watermark strength improves text quality but sacrifices detectability, while more advanced watermarking schemes provide only limited gains when used as distillation teachers (Figure~\ref{fig:main_frontier}).

To move beyond this trade-off, we ask whether text watermarks are well suited as distillation teachers for model fingerprinting. Text watermarks~\citep{kirchenbauer2023watermark,lee2024wrote,wang2025morphmark,wu2023dipmark,liu2024unforgeable,kuditipudi2023robust} are typically designed to remain detectable from an individual output, including short or modified outputs, which constrains how sparse their signal can be. Model fingerprinting, by contrast, can aggregate signal across multiple queries, making sparser teacher signals viable. This raises a key question: where should the remaining signal be placed? We analyze signal placement through token surprisal and show that, even at comparable watermark strength, different placements target tokens of different plausibility. Motivated by this analysis, we propose near-tie restriction, which uses logit gaps to limit teacher modifications to tokens close to the model's top prediction. Since it only restricts which tokens carry the watermark signal, near-tie can be incorporated into existing watermarking schemes without altering their underlying mechanisms.

Empirically, this simple restriction substantially improves detection--quality frontiers across multiple models and watermarking schemes under common deployment transformations. It also yields higher generation quality across query budgets, and further improves existing watermarking schemes when combined with them.

\begin{figure}[t]
    \centering
    \includegraphics[width=1.0\linewidth]{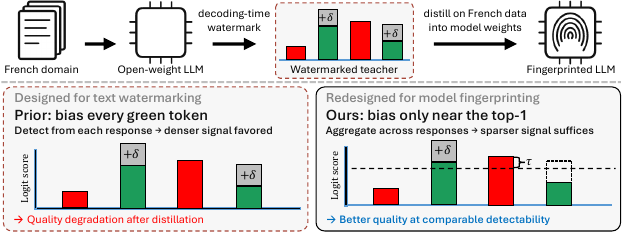}
    \vspace{-1.8em}
    \caption{Overview of rethinking watermark teachers for model fingerprinting. Top: watermark-based fingerprinting distills a text watermark into model weights on a semantic domain such as French. The watermark splits tokens into green and red sets and biases green tokens to create a detectable statistical signal. Bottom: prior teachers bias every green token for individual-response detection, whereas fingerprint verification aggregates evidence across responses and permits sparser signals. Our near-tie restriction applies the bias only to green tokens near the top-1 logit.}
    \label{fig:overview}
    \vspace{-1.4em}
\end{figure}

Our main contributions are summarized as follows:
\begin{itemize}
\item We show that the evaluation of a recent watermark-based fingerprinting protocol understates text quality degradation in open-ended generation and favors overly strong teachers.

\item We point out an overlooked implication of model fingerprinting: aggregating signal across queries makes sparser teacher signals viable. Motivated by token-surprisal analysis, we propose near-tie restriction using base-model logit gaps for signal placement.

\item Our comprehensive experiments show that near-tie restriction consistently improves detection--quality frontiers, preserves higher generation quality across query budgets, and further improves existing watermarking schemes when combined with them.

\end{itemize}

\vspace{-1em}
\section{Related Work}\label{sec:related_work}
\vspace{-.5em}
\textbf{LLM fingerprinting.}
White-box fingerprinting methods~\citep{refael2024slip,zeng2024huref,zhang2025reef} provide robust verification but require access to model weights, limiting their practicality in real-world deployments. This motivates black-box methods, which verify ownership from queries and responses alone and broadly follow either a backdoor-based or watermark-based approach. Backdoor-based methods fine-tune a model to emit predetermined responses to secret trigger queries and verify ownership through exact matches~\citep{xu2024instructional,nasery2026scalable,russinovich2024heythatsmodelintroducing}. However, atypical query--response pairs can undermine stealthiness, while exact-match verification can be fragile to deployment transformations. Watermark-based fingerprinting instead distills a watermark into model weights, enabling statistical verification from queries in a semantic domain and improving robustness to deployment transformations~\citep{gloaguen2026llm}. However, we find that its evaluation understates text quality degradation on open-ended generation within the fingerprint domain. We provide extended related work in Appendix~\ref{appendix:extended:related_work}.

\textbf{LLM watermarking.}
LLM watermarking supports text provenance by embedding a hidden statistical
signal into generated text. KGW~\citep{kirchenbauer2023watermark}, a widely
used red--green scheme, partitions the vocabulary and biases green-token
logits during decoding, making green tokens statistically overrepresented.
Subsequent work develops alternative partitioning and biasing strategies to
improve the detection--quality trade-off~\citep{lee2024wrote,wang2025morphmark,
wu2023dipmark,liu2024unforgeable}. Beyond decoding-time watermarking,
\citet{gu2024learnability} show that watermarks can be distilled into
open-weight models, enabling watermarked generation without modifying
inference-time decoding. Building on this idea, \citet{gloaguen2026llm}
distill a semantically conditioned watermark for model fingerprinting.
However, this directly inherits teachers designed for text provenance detection, overlooking that fingerprint verification aggregates evidence across responses. We exploit this distinction to design a sparser teacher that better preserves text quality.
\vspace{-1em}
\section{Preliminary}\label{sec:preliminary}
\vspace{-.5em}
A language model $\mathcal{M}$ generates text by predicting the next token $x_t$ in a sequence. Given the previous tokens $x_{<t}$, it outputs a vector of logits $\boldsymbol{\ell}_t$ over the vocabulary $\mathcal{V}$, which defines the next-token distribution $Q_t=\operatorname{softmax}(\boldsymbol{\ell}_t)$. The next token $x_t$ is then randomly sampled according to $Q_t$ depending on the choice of the decoding strategy. 

\textbf{Watermarking algorithm.}
LLM watermarking enables model owners to statistically verify whether a text was generated by their model or not. A common technique is to embed a detectable signal during decoding. In this work, we focus on  KGW introduced by \citet{kirchenbauer2023watermark}, which is a widely used red--green watermarking scheme. At each decoding step, a pseudorandom function of the secret key $\xi$ and the preceding $h$ tokens partitions the vocabulary $\mathcal{V}$ into a green list $G_t$, with $|G_t|=\gamma|\mathcal{V}|$, and its red complement. KGW adds a bias $\delta\geq 0$ to the logits of green tokens, yielding
\begin{equation*}
\widetilde{\ell}_t(v)
=
\ell_t(v)
+
\delta\mathbf{1}\{v\in G_t\}, \text{ for all $v\in \mathcal{V}$, and }
\qquad
\widetilde{Q}_t
=
\operatorname{softmax}(\widetilde{\boldsymbol{\ell}}_t)\;,
\end{equation*}
where we omit the dependence of $G_t$ on $[x_{t-h-1},\ldots,x_{t-1}]$ and $\xi$ for brevity. 
For a generated response $x=[x_0,\ldots,x_{L-1}]$, let $N_G(x)=\sum_{t=0}^{L-1}\mathbf{1}\{x_t\in G_t\}$ denote the number of generated green tokens. Under the null hypothesis of unwatermarked generation, the expected green-token count is $\gamma L$. The detector therefore computes a test statistic, 
\begin{equation}
z(x)
\; = \; 
\frac{N_G(x)-\gamma L}
{\sqrt{\gamma(1-\gamma)L}}\;,
\label{eq:watermark-z-score}
\end{equation}
and classifies the response as watermarked if $z(x)>\rho$, where $\rho$ denotes the detection threshold.

\textbf{Watermark-based fingerprinting.}
Watermark-based fingerprinting distills a decoding-time watermark into model weights. The distilled model emits the watermark signal without the need for modified decoding at inference time,  thereby enabling black-box ownership verification~\citep{gloaguen2026llm}. Unlike LLM watermarking for text provenance, which is required to watermark every output text, model fingerprinting only needs to watermark when prompted by the verifier. This is achieved by conditioning the fingerprint on a secret semantic domain $\mathcal{D}_{\mathrm{fp}}$, e.g., French, and a fingerprinted model $\mathcal{M}_{\mathrm{fp}}(\widetilde Q)$ is trained to match a watermarked teacher $\widetilde Q$ within this domain, minimizing 
\begin{equation}
\mathcal{L}_{\mathrm{fp}}(\theta)
\;=\;
\mathbb{E}_{x\sim\mathcal{D}_{\mathrm{fp}}}
\left[
\sum_{t=1}^{|x|}
\mathrm{KL}
\left(
\widetilde{Q}_t \Vert P_{\theta,t}
\right)
\right],
\label{eq:fingerprint-distillation}
\end{equation}
where $P_{\theta,t}=P_\theta(\cdot\mid x_{<t})$ denotes the distilled next-token distribution of $\mathcal{M}_{\mathrm{fp}}(\widetilde Q)$ parametrized by $\theta$ and $KL(\cdot\|\cdot)$ denotes the Kullback-Leibler divergence. A regularization term is additionally used to suppress watermarked generations outside $\mathcal{D}_{\mathrm{fp}}$ to limit the degradation of the model. 

For verification, the owner issues $N$ queries from
$\mathcal D_{\mathrm{fp}}$ and concatenates their responses as
$y^{1:N}=y^1\circ\cdots\circ y^N$. Since similar queries can produce
overlapping responses, aggregating all tokens may count the same watermark
evidence multiple times and artificially inflate the detection statistic.
Therefore, duplicate $h$-grams are globally removed by retaining only their
first occurrences, yielding the scoring sequence
$w=\operatorname{Dedup}_h(y^{1:N})$. However, each retained token is evaluated in its original response context when checking whether it is green. The detector then computes $z(w)$ using Eq.~(\ref{eq:watermark-z-score}). The full training objective is provided in Appendix~\ref{appendix:fingerprinting:training}.

\textbf{Threat model.}
We consider a black-box ownership-verification setting in which the owner retains the base model $\mathcal{M}_{\mathrm{base}}$ and a secret key $\xi$ but can access a suspect model only by issuing queries and observing its responses, which may have gone through  an unknown transformation $T\in\mathcal{T}$, where $\mathcal{T}$ includes the identity transformation, different sampling parameters, quantization, pruning, and fine-tuning. 
Let $Y_T^{1:N}$ denote the sampled responses obtained by issuing $N$ fingerprint queries to $T(\mathcal{M}_{\mathrm{fp}}(\widetilde Q))$. 
We define the worst-case expected $z$-score as
\begin{equation}
z_N^{\mathrm{worst}}
\left(\mathcal{M}_{\mathrm{fp}}(\widetilde Q)\right)
\;=\;
\min_{T\in\mathcal{T}}
\; \mathbb{E}\!\left[
z\left(Y_T^{1:N}\right)
\right].
\label{eq:robust-detectability}
\end{equation}
We say a fingerprinted model $\mathcal{M}_{\mathrm{fp}}(\widetilde Q)$ is robustly detectable if the empirical estimate of $z_N^{\mathrm{worst}}$ exceeds a threshold $\rho>0$.
\vspace{-.8em}

\begin{table*}[t]
\centering
\caption{Utility re-evaluation of the Llama-3.1-8B fingerprinting setting from
\citet{gloaguen2026llm}, which selects KGW with $\delta=4$ as the teacher
distribution. French Benchmark (FB) accuracy and mean token entropy understate
generation-quality degradation, whereas PPL and LLM-judge scores on
WritingPrompts reveal substantial degradation in open-ended generation.}
\vspace{-.5em}
\label{tab:utility-reevaluation}
\small
\renewcommand{\arraystretch}{1.15}
\setlength{\tabcolsep}{10pt}
\begin{tabular}{lcccc}
\toprule
Model
&
FB Acc $\uparrow$
&
Mean Entropy\footnotemark{}
&
PPL $\downarrow$
&
LLM Judge $\uparrow$
\\
\midrule
Base
&
0.64
&
2.15
&
\cellcolor{gray!10}4.63
&
\cellcolor{gray!10}7.76
\\
Fingerprinted
&
0.61
&
3.08
&
\cellcolor{gray!11}\textbf{28.27}
&
\cellcolor{gray!11}\textbf{3.73}
\\
\bottomrule
\end{tabular}
\vspace{-1em}
\end{table*}

\section{Revisiting Utility Evaluation of LLM Fingerprinting}
\vspace{-.4em}
\citet{gloaguen2026llm} instantiate their fingerprinting framework with French as the fingerprint domain $\mathcal{D}_{\mathrm{fp}}$. Although the watermark behavior is intended to remain within $\mathcal{D}_{\mathrm{fp}}$, preserving generation quality in this domain remains important for legitimate users, for example, using French. We revisit their utility evaluation in open-ended generation and examine their detectability-quality trade-off. 

\footnotetext{
The \href{https://github.com/eth-sri/robust-llm-fingerprints/blob/15bc1929569357130f2dbc0b09f91bbf4f4bd947/src/robust_fp/quality/ppl.py\#L73}
{\textcolor{blue}{official implementation}}
of \href{https://proceedings.iclr.cc/paper_files/paper/2026/file/71d72e73b9e0c8a5cabf5e8da6689c8c-Paper-Conference.pdf\#page=20}
{\textcolor{blue}{\NoHyper\citep{gloaguen2026llm}\endNoHyper}}
computes the reported PPL as mean token entropy.
}

\textbf{Prior evaluation understates generation-quality degradation.}
For KGW, \citet{gloaguen2026llm} fix $\gamma=0.25$, $h=1$, and select $\delta=4$ from a sweep over bias strengths, reporting high detectability with little utility degradation across the evaluated models. Their utility evaluation reports accuracy on the French Benchmark\footnote{We report accuracy over accuracy-based tasks; the original paper does not specify the exact subset.} and a metric labeled as PPL. However, this accuracy-based evaluation, such as multiple-choice tasks, does not capture quality degradation in open-ended generation.
Moreover, we find that the reported PPL is computed as mean token entropy rather than the standard perplexity. We reproduce their Llama-3.1-8B-Instruct setting and evaluate it on 500 French-translated WritingPrompts examples~\citep{fan2018hierarchical}. This open-ended setting admits many plausible next-token alternatives, making it more sensitive to watermark-induced probability shifts. As shown in Table~\ref{tab:utility-reevaluation}, French Benchmark accuracy decreases only from $0.64$ to $0.61$, while PPL increases from $4.63$ to $28.27$ and the LLM-judge score (Appendix~\ref{appendix:llm_judge}) decreases from $7.76$ to $3.73$, revealing substantial generation-quality degradation.

\begin{wrapfigure}[12]{r}{0.33\columnwidth}
    \vspace{-1.2em}
    \centering
    \includegraphics[width=0.90\linewidth]{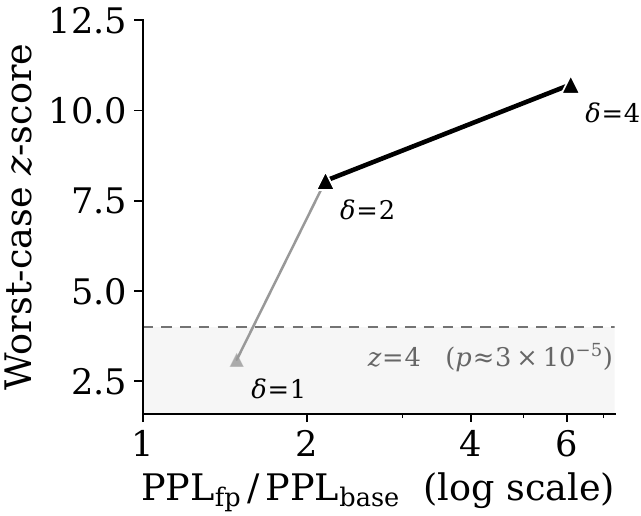}
    \vspace{-.8em}
    \caption{Detection--quality trade-off across logit biases $\delta$; upper left is better.
    }
    \label{fig:kgw-bias-sweep}
\end{wrapfigure}

\textbf{A weaker bias partially reduces degradation at the cost of detectability.}
A direct way to reduce text quality degradation is to weaken the teacher's logit bias, thereby limiting the distortion of the teacher distribution. We therefore sweep $\delta\in\{1,2,4\}$ and evaluate each setting using the PPL ratio $\mathrm{PPL}_{\mathrm{fp}}/\mathrm{PPL}_{\mathrm{base}}$ for text quality and the worst-case $z$-score $z_N^{\mathrm{worst}}$ over  $\mathcal{T}$ for robust detectability. The experimental setup, including transformations $\mathcal{T}$, is detailed in Section~\ref{sec:experiments:setup}. As shown in Figure~\ref{fig:kgw-bias-sweep}, reducing $\delta$ from $4$ to $2$ lowers the PPL ratio while maintaining robust detectability. 
However, $\delta=2$ still more than doubles the base model's PPL, whereas $\delta=1$ falls below the detection threshold. Thus, reducing the bias mitigates text quality degradation at the cost of a substantial drop in detection rate, motivating fingerprinting schemes with a better detection--quality trade-off.

\section{{Rethinking Watermarked Teachers for Model Fingerprinting}}\label{sec:method}
In this section, we first highlight an overlooked implication of the distinction between watermarking for text provenance and watermarking for model ownership verification: the latter can rely on a sparse watermark signal (Section~\ref{sec:method:DW_vs_FP}). Guided by this distinction, we analyze where this sparse signal should be placed in the teacher distribution (Section~\ref{sec:method:where_signal}) and use the resulting insight to introduce near-tie restriction (Section~\ref{sec:method:Ours_NT}).

\subsection{Fingerprint Verification Allows Sparse Watermark Signals}
\label{sec:method:DW_vs_FP}
Watermarking for text provenance typically requires detection from a single output, including short texts. In contrast, watermarking for model ownership verification can issue multiple queries to a suspect model and use aggregate responses to verify the ownership. This allows weaker watermark signals to accumulate across queries, making sparse signal placement viable; this in turn can reduce distortion of the base distribution.
However, existing watermark distillation methods directly adopt a decoding-time KGW teacher, which biases every green token. Distilling these dense biases transfers modifications over a broad set of tokens to the student, which causes model degradation in Table~\ref{tab:utility-reevaluation} and Figure~\ref{fig:kgw-bias-sweep}. This mismatch motivates restricting the green tokens biased by the watermark teacher.

\subsection{Where Should the Sparse Signal Be Placed?}
\label{sec:method:where_signal}
To exploit the watermark sparsity allowed in fingerprint verification, we apply the watermark bias to only a subset of the green list. 
There is a key question of interest: which green tokens should carry the signal? Formally, at decoding step $t$, let $S\subseteq G_t$ denote the biased subset, yielding
\begin{equation}
\widetilde{\ell}_t^S(v)
=
\ell_t(v)+\delta\mathbf{1}\{v\in S\},
\qquad
\widetilde Q_t^S
=
\operatorname{softmax}(\widetilde{\boldsymbol{\ell}}_t^S).
\end{equation}

To control for signal strength, we compare subsets with the same base probability mass $Q_t(S)$ and bias $\delta$. Even under these matched conditions, the identities of the biased tokens, i.e.,~$S$, can affect generation quality. Since distillation transfers the teacher's probability shifts to the student, increasing the probabilities of implausible tokens can introduce avoidable distortion. Our main idea is therefore to restrict the bias to next-token alternatives that are already plausible under the base model, as supported by the ablation in Table~\ref{tab:random-mass-control}.
We quantify the {\em plausibility} of token $v$ using its base-model surprisal,
\begin{equation}
r_t(v)=-\log Q_t(v),
\end{equation}
where lower surprisal indicates higher probability under the base model. 
We then characterize the plausibility of a biased subset by its probability-weighted average surprisal,
\begin{equation}
\bar r_t(S)
=
\mathbb{E}_{v\sim Q_t(\cdot\mid S)}
\left[r_t(v)\right].
\end{equation}
where $Q_t(v|S)$ is the base-model probability of $v$ given $v\in S$.
Because applying the bias reweights the entire next-token distribution, we measure its effect through the resulting change in expected base-model surprisal:
\begin{equation}
\Delta_t^{\mathrm{surp}}(\widetilde Q_t^S;Q_t)
=
\mathbb{E}_{v\sim\widetilde Q_t^S}
\left[r_t(v)\right]
-
\mathbb{E}_{v\sim Q_t}
\left[r_t(v)\right].
\end{equation}
A larger surprisal shift indicates that the watermarked teacher moves more probability toward tokens considered unlikely by the base model. The following proposition formalizes this comparison for biased subsets with equal base probability mass. 

\begin{proposition}[Effect of bias placement]
\label{prop:support-comparison}
Suppose $S_1,S_2\subseteq G_t$ satisfy
$Q_t(S_1)=Q_t(S_2)>0$ and receive the same bias $\delta>0$. Then
\begin{equation}
\left(
\widetilde Q_t^{S_1}(G_t),
\mathrm{KL}(\widetilde Q_t^{S_1}\Vert Q_t)
\right)
=
\left(
\widetilde Q_t^{S_2}(G_t),
\mathrm{KL}(\widetilde Q_t^{S_2}\Vert Q_t)
\right),
\end{equation}
whereas
\begin{equation}
\Delta_t^{\mathrm{surp}}(\widetilde Q_t^{S_1};Q_t)
\leq
\Delta_t^{\mathrm{surp}}(\widetilde Q_t^{S_2};Q_t)
\quad\Longleftrightarrow\quad
\bar r_t(S_1)\leq\bar r_t(S_2).
\end{equation}
\end{proposition}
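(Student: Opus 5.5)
The plan is to write $\widetilde Q_t^S$ in closed form and observe that every quantity except the surprisal shift depends on $S$ only through the scalar $m:=Q_t(S)$. Because the bias is a constant $\delta$ on $S$ and zero elsewhere, the normalizer is
\[
Z_S=\sum_v Q_t(v)e^{\delta\mathbf 1\{v\in S\}}=1+(e^\delta-1)m .
\]
Hence $\widetilde Q_t^S(v)=Q_t(v)e^{\delta\mathbf 1\{v\in S\}}/Z_S$. Write $Z:=Z_{S_1}=Z_{S_2}$ for the common value under the hypothesis $Q_t(S_1)=Q_t(S_2)$.

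\textbf{Green mass.} Using $S\subseteq G_t$, I would compute
\[
\widetilde Q_t^S(G_t)=\frac{e^\delta m+\big(Q_t(G_t)-m\big)}{Z}.
\]
This depends only on $m$, $\delta$ and $Q_t(G_t)$, none of which differ between $S_1$ and $S_2$.

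\textbf{KL divergence.} The log-ratio is $\log(\widetilde Q_t^S/Q_t)(v)=\delta\mathbf 1\{v\in S\}-\log Z$. Taking the expectation under $\widetilde Q_t^S$ gives
\[
\mathrm{KL}(\widetilde Q_t^S\Vert Q_t)=\delta\,\widetilde Q_t^S(S)-\log Z,\qquad \widetilde Q_t^S(S)=\frac{e^\delta m}{Z}.
\]
This is again a function of $m$ alone, which proves the first display of Proposition~\ref{prop:support-comparison}.

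\textbf{Surprisal shift.} I would expand $\mathbb E_{\widetilde Q_t^S}[r_t]$ by splitting the sum over $S$ and $S^c$:
\[
\mathbb E_{\widetilde Q_t^S}[r_t]=\frac{1}{Z}\Big(\sum_v Q_t(v)r_t(v)+(e^\delta-1)\sum_{v\in S}Q_t(v)r_t(v)\Big).
\]
The inner sum is $\sum_{v\in S}Q_t(v)r_t(v)=m\,\bar r_t(S)$, and the first sum is the base entropy $H(Q_t)$. Therefore
\[
\Delta_t^{\mathrm{surp}}(\widetilde Q_t^S;Q_t)=\frac{H(Q_t)+(e^\delta-1)m\,\bar r_t(S)}{Z}-H(Q_t).
\]
For fixed $m$ this is an affine function of $\bar r_t(S)$ with slope $(e^\delta-1)m/Z$. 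The slope is strictly positive because $\delta>0$ and $m>0$, so the map is strictly increasing and the equivalence in the second display follows.

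No step here is a real obstacle; the computation is short. The only points that need care are: the normalizer must come out identical for $S_1$ and $S_2$; the entropy term must not depend on $S$; and strict positivity of the slope needs both $\delta>0$ and $Q_t(S)>0$. If $Q_t$ has zero-probability tokens, surprisal is infinite there. I would note that such tokens carry zero weight in every expectation, so they can be dropped.
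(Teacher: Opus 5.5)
Your proposal is correct and follows the paper's proof essentially step for step. It uses the same normalizer $Z=1+(e^\delta-1)m$ and the same closed forms for the green mass and the KL divergence, which depend on $S$ only through $m$. It also reaches the same affine expression for the surprisal shift in $\bar r_t(S)$, with positive slope $(e^\delta-1)m/Z$; your base entropy $H(Q_t)$ is exactly the paper's $\mu=\mathbb{E}_{v\sim Q}[r(v)]$.
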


Proposition~\ref{prop:support-comparison} shows that two subsets with equal base probability mass provide the same expected watermark evidence and have the same KL divergence from the base distribution, while the subset with lower average surprisal induces a smaller surprisal shift. Thus, among equally strong signal placements, the bias should be assigned to green tokens that the base model considers more plausible. A proof is provided in Appendix~\ref{appendix:support-comparison-proof}.

\subsection{Our approach: Near-Tie Restriction}
\label{sec:method:Ours_NT}
To operationalize this insight, we introduce a near-tie restriction, which
uses base-model logit gaps to identify plausible green tokens. Let
$v_t^\star=\arg\max_{v\in\mathcal V}\ell_t(v)$ denote the top-1 prediction at
decoding step $t$, and let $v_t^{(2)}$ denote the token with the second-highest
logit. Since the softmax normalizer cancels when comparing token surprisals,
the excess surprisal of token $v$ relative to $v_t^\star$ is
\begin{equation}
r_t(v)-r_t(v_t^\star)
=
\ell_t(v_t^\star)-\ell_t(v).
\end{equation}
A smaller logit gap therefore indicates greater plausibility under the base
model. Accordingly, we restrict the biased support to green tokens within a
threshold $\tau$ of the top-1 logit:
\begin{equation}
S_t^{\mathrm{NT}}(\tau)
=
\left\{
v\in G_t:
\ell_t(v_t^\star)-\ell_t(v)<\tau
\right\}.
\end{equation}
The resulting watermarked teacher is defined by
\begin{equation}
\widetilde{\ell}_t^{\mathrm{NT}}(v)
=
\ell_t(v)
+
\delta\mathbf{1}\{v\in S_t^{\mathrm{NT}}(\tau)\},
\qquad
\widetilde Q_t^{\mathrm{NT}}
=
\operatorname{softmax}
(\widetilde{\boldsymbol{\ell}}_t^{\mathrm{NT}}).
\end{equation}
We denote KGW with near-tie restriction as \textsc{KGW-NT}, where the suffix \textsc{-NT} indicates the use of near-tie restriction. We calibrate $\tau$ as the $q$-th percentile of the top-1--top-2 logit-gap distribution measured on the fingerprint training data. A smaller $q$ yields a stricter restriction and biases fewer green tokens, while $\tau\rightarrow\infty$ recovers the original dense watermark.

For verification, we score only near-tie positions, where the restricted bias can affect token selection. Since the model owner retains the base model, each response $y^i$ can be re-scored to recover these positions.
Let $\boldsymbol{\ell}_{i,t}$ denote the logits conditioned on $y^i_{<t}$, and define 
\begin{equation}
a_{i,t}^{\tau}
=
\mathbf{1}
\left\{
\ell_{i,t}(v_{i,t}^\star)
-
\ell_{i,t}(v_{i,t}^{(2)})
<
\tau
\right\}.
\end{equation}
We retain only positions with $a_{i,t}^{\tau}=1$ and globally deduplicate
their $h$-grams, yielding the near-tie scoring sequence
\begin{equation}
w_\tau
=
\operatorname{Dedup}_h
\left(
\left.y^{1:N}\right|_{a_{i,t}^{\tau}=1}
\right),
\end{equation}
where the restriction affects only the scored positions and preserves their
original within-response contexts. The detector then computes $z(w_\tau)$ using Eq.~(\ref{eq:watermark-z-score}). Since neither near-tie position selection nor $h$-gram deduplication uses the secret key or green-list membership, neither operation biases the null green-token rate, which remains $\gamma$. The standard analytical detector therefore remains applicable. Further details are provided in Appendix~\ref{appendix:analytical-nt-detector}.

\textbf{Compatibility with existing watermarks.}
We consider two representative KGW extensions that improve KGW's
detection--quality trade-off through complementary mechanisms. SWEET~\citep{lee2024wrote} applies the green-token bias only at high-entropy positions, where multiple plausible alternatives make the modification less disruptive. MorphMark~\citep{wang2025morphmark} instead adapts the bias strength to the cumulative base-model probability of the green list, avoiding a fixed modification across contexts. Both target provenance detection from a single, potentially modified output, where excessive token-level sparsity may provide insufficient evidence. Neither, however, determines which individual green tokens to boost based on their top-1-relative logit gaps. Near-tie is complementary to these mechanisms and can be applied by further restricting their biased green-token sets, yielding SWEET-NT and MorphMark-NT.
Implementation details are in Appendix~\ref{appendix:near-tie-combinations} and empirical results are in Figure~\ref{fig:existing-watermarks-nt}.


\vspace{-.5em}
\section{Experiments}\label{sec:experiments}
\vspace{-.5em}
\subsection{Setup}\label{sec:experiments:setup}
\vspace{-.5em}

\textbf{Models and fingerprint domains.}
We evaluate four instruction-tuned models: Llama-3.2-3B, Qwen-2.5-3B, Llama-3.1-8B, and Gemma-9B. Following \citet{gloaguen2026llm}, we use French as the primary fingerprint domain, construct the distillation data from the Lucie corpus~\citep{gouvert2025lucie}, and follow their training protocol. Full distillation details are provided in Appendix~\ref{appendix:fingerprinting:training}. We additionally evaluate a medical fingerprint domain in Appendix~\ref{appendix:extended:medicine}, where our method consistently outperforms the baselines.

\textbf{Datasets and verification setup.}
We use French Alpaca as the query set for fingerprint verification. Following the baseline protocol, unless otherwise specified, we aggregate responses from $N=1000$ queries, each capped at 200 generated tokens, and use a detection threshold of $\rho=4$, corresponding to a one-sided false-positive rate of approximately $3\times10^{-5}$. To evaluate open-ended generation quality, we translate 500 WritingPrompts examples~\citep{fan2018hierarchical} into French using GPT-5 and generate up to 256 tokens per prompt with temperature $T=1.0$ and top-$p=0.9$. We additionally evaluate 500 examples from OrangeSum~\citep{eddine2021barthez}, a French news dataset; the results in Appendix~\ref{appendix:extended:orange} show consistent improvements over the baselines.

\textbf{Baselines.}
We compare KGW-NT against four watermark teachers for fingerprint
distillation: KGW~\citep{kirchenbauer2023watermark}; SWEET~\citep{lee2024wrote},
which applies the green-token bias only at high-entropy positions;
MorphMark~\citep{wang2025morphmark}, which adaptively adjusts the watermark
strength; and KGW-TK, which biases only green tokens among the base model's
top-$k$ predictions. KGW-TK is the closest rank-based counterpart and tests whether the gains arise from logit-gap-based selection rather than sparsity induced by a simple rank cutoff. We construct the detection--quality frontiers using 3 settings for KGW and 12 settings for each of SWEET, MorphMark, KGW-TK, and KGW-NT. These grids sweep $\delta\in\{1,2,4\}$, where applicable, and each method's selection or adaptation parameters. Implementation details are provided in Appendix~\ref{appendix:baseline-hyperparameters}.

\begin{figure}[!t]
    \vspace{-.5em}
    \centering
    \begin{minipage}[t]{0.66\linewidth}
        \centering
        \includegraphics[width=\linewidth]
        {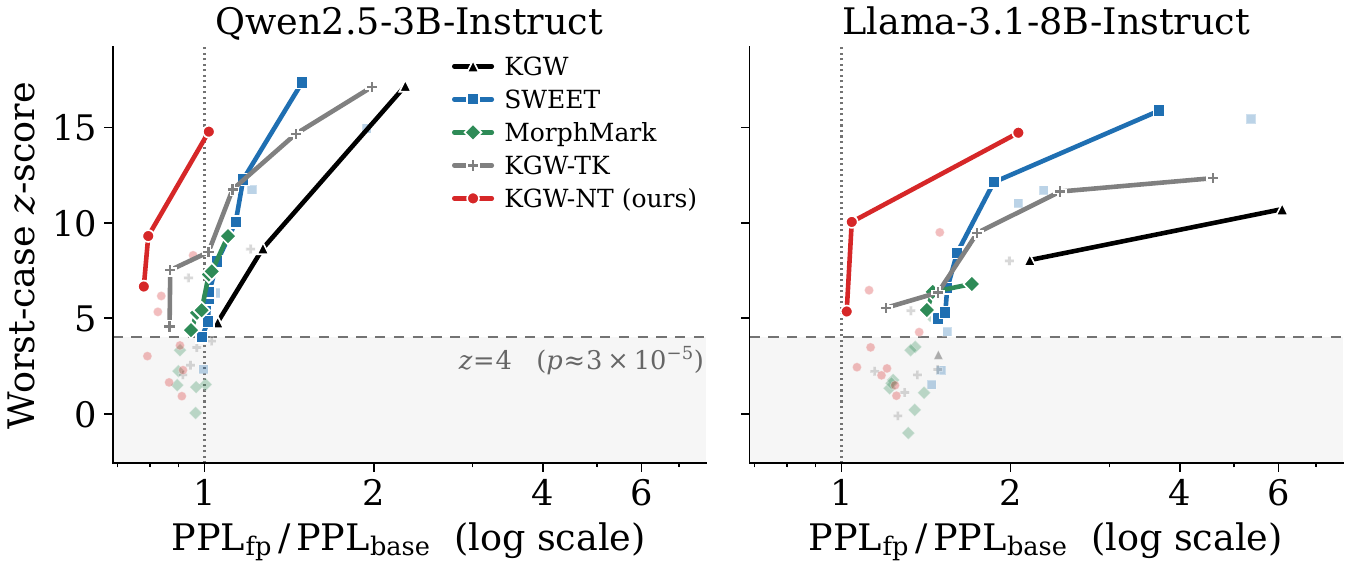}
    \end{minipage}
    \hfill
    \begin{minipage}[t]{0.325\linewidth}
        \centering
        \includegraphics[width=\linewidth]
        {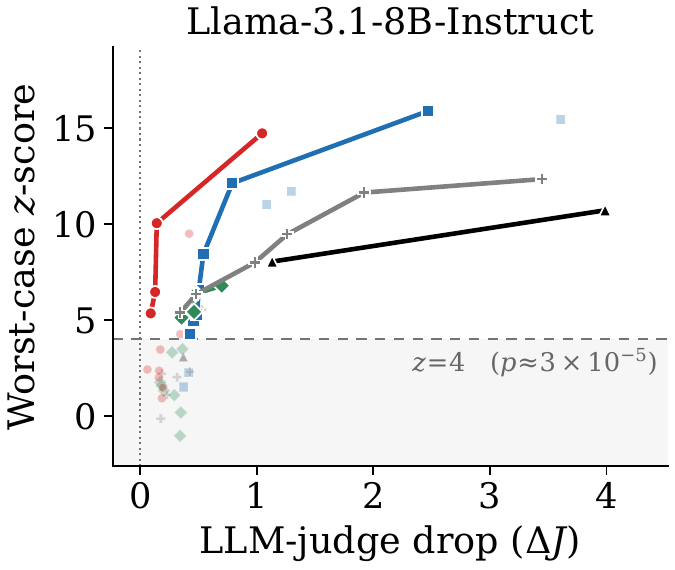}
    \end{minipage}
    \vspace{-.5em}
    \caption{Detection--quality trade-offs across watermark teachers. In the LLM-judge panel, $\Delta J=J_{\mathrm{base}}-J_{\mathrm{fp}}$ denotes the score drop relative to the base model (lower is better). KGW-NT consistently improves the Pareto frontier. Additional results for Llama-3.2-3B, Gemma-9B, and LLM-judge evaluation are provided in Figure~\ref{appendix:fig:frontier_all}, consistently showing improved Pareto frontiers.}
    \label{fig:main_frontier}
    \vspace{-1em}
\end{figure}

\begin{figure}[!t]
    \centering
    \includegraphics[width=\linewidth]{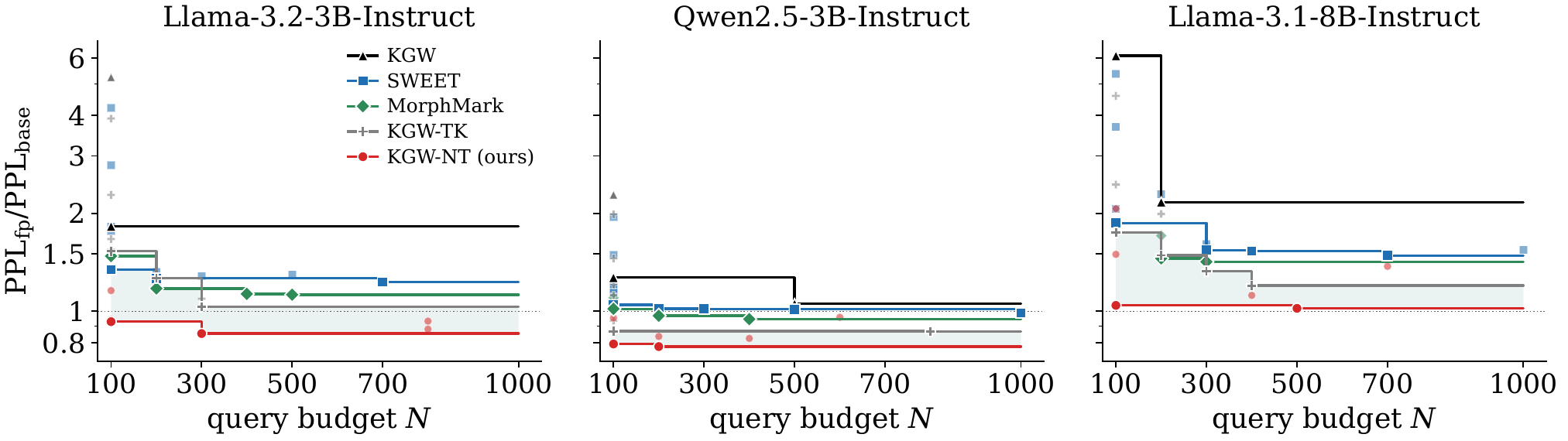}
    \vspace{-1.5em}
    \caption{Achievable text quality across query budgets. For each budget, we report the lowest PPL ratio among operating points with a worst-case $z$-score exceeding the detection threshold in that budget. KGW-NT consistently achieves the lowest ratio, demonstrating more query-efficient detection while better preserving text quality. The result for Gemma-9B is provided in Figure~\ref{appendix:fig:query_budget_all}.}
    \label{fig:query_budget}
    \vspace{-2em}
\end{figure}

\textbf{Deployment transformations.}
Following \citet{gloaguen2026llm}, we evaluate five deployment transformations: sampling, system prompts, quantization, pruning, and fine-tuning. We use $T=0.7$ as the clean setting and vary $T\in\{0.4,0.7,1.0\}$ for sampling. We test three system prompts, FP8 and INT4 quantization, and WANDA~\citep{sun2024simple} and SparseGPT~\citep{frantar2023sparsegpt} pruning at $50\%$ sparsity. For fine-tuning, we use the French subset of WildChat~\citep{zhao2024wildchat} as a strong transformation within the fingerprint domain. Further details are provided in Appendix~\ref{appendix:transformation}.

\textbf{Evaluation metrics.}
We evaluate detectability using the empirical worst-case z-score in Eq.~\ref{eq:robust-detectability}. For each transformation, we average the z-score over five random seeds and report the minimum mean across transformations; the corresponding means and standard deviations are reported in Table~\ref{appendix:mean-std}. We evaluate text quality using PPL~\citep{jelinek1977perplexity} computed by Qwen-2.5-32B as an external reference model and an LLM-judge score~\citep{zheng2023judging,fu2024gptscore,liu2023g} assigned by GPT-5 on a 1--10 scale. Lower PPL and higher judge scores indicate better generation quality. The LLM-judge prompt is provided in Appendix~\ref{appendix:llm_judge}.

\subsection{Main Results}
\vspace{-.5em}
\textbf{Detection--quality trade-off.}
Figure~\ref{fig:main_frontier} shows that KGW-NT consistently improves the Pareto frontier, achieving higher worst-case detectability at comparable text quality or better text quality. This trend holds across model families and is consistent between PPL and LLM-judge evaluation. Although advanced methods such as SWEET and MorphMark partially improve upon KGW, restricting KGW with near-tie selection yields a stronger frontier. KGW-NT also outperforms KGW-TK, which induces sparsity by biasing green tokens within a fixed top-$k$ set. This comparison suggests that the gain arises from logit-gap-based selection rather than sparsity alone, as a rank cutoff can still include tokens separated from the top-1 prediction by large logit gaps.

\textbf{Achievable text quality across query budgets.}
We evaluate how much text quality each method can preserve while remaining
detectable under a given query budget. For each operating point $\theta$, we identify the smallest budget at which its worst-case $z$-score exceeds the detection threshold:
\begin{equation}
N_{\min}(\theta)
=
\min
\left\{
N\in\mathcal B:
z_{\theta}^{\mathrm{worst}}(N)>4
\right\},
\end{equation}
where $\mathcal B=\{100,200,\ldots,1000\}$. For each method $M$ and budget
$N$, we report the lowest PPL ratio among its operating points detectable
within that budget:
\begin{equation}
R_M(N)
=
\min_{\substack{\theta\in\Theta_M\\N_{\min}(\theta)\leq N}}
\frac{\mathrm{PPL}_{\mathrm{fp}}(\theta)}
{\mathrm{PPL}_{\mathrm{base}}}.
\end{equation}
Figure~\ref{fig:query_budget} shows that larger query budgets enable higher text quality at comparable detectability by accumulating more detection signal. KGW-NT consistently achieves the lowest PPL ratio at each budget while maintaining robust detectability. Some KGW-NT settings on the 3B models attain PPL ratios below $1$, indicating lower perplexity than the base model under the external reference. 
As shown in Appendix~\ref{appendix:diversity}, KGW-NT slightly lowers its own mean token entropy, consistent with near-tie biasing only tokens already plausible under the base model. However, diversity (distinct-$n$ and self-BLEU) remains close to the base model, and LLM-judge scores are comparable to or above it, without an evident quality cost. Notably, KGW-TK and KGW-NT have similar output entropy ($1.05$ vs.\ $1.03$) but substantially different PPL ratios ($1.20$ vs.\ $1.02$), suggesting that the choice of biased tokens, rather than entropy alone, drives the difference.




\begin{figure*}[t]
    \centering
    \begin{subfigure}[t]{0.63\linewidth}
        \centering
        \includegraphics[width=\linewidth]{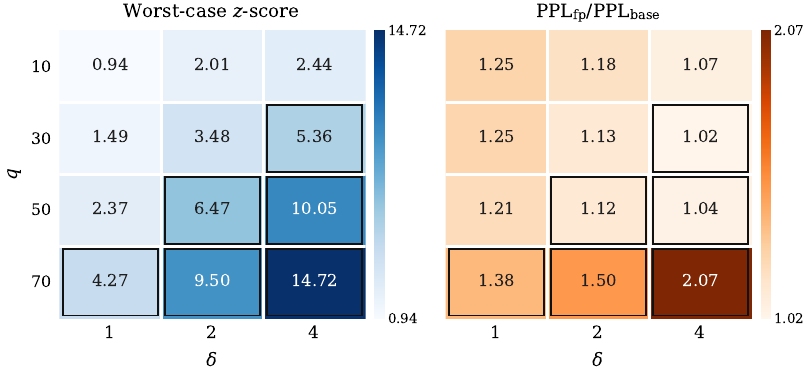}
        \caption{Effect of the logit bias $\delta$ and percentile $q$ on the worst-case $z$-score and PPL ratio. Outlined cells satisfy the 
        detection criterion $z>4$.}
        \label{fig:nt_hyperparameters}
    \end{subfigure}
    \hfill
    \begin{subfigure}[t]{0.36\linewidth}
        \centering
        \includegraphics[width=\linewidth]{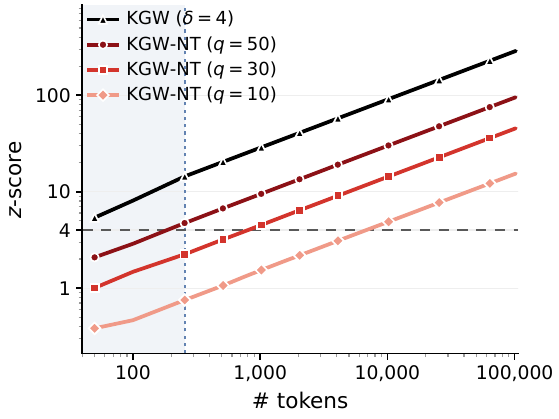}
        \caption{Detection signal accumulated over increasing numbers of
        scored tokens.}
        \label{fig:token_budget}
    \end{subfigure}
    \vspace{-.7em}
    \caption{Analysis of near-tie restriction on Llama-3.1-8B-Instruct.
    \textbf{(a)} Moderate restriction percentiles provide favorable
    detection--quality trade-offs, whereas overly broad support degrades text quality. \textbf{(b)} Aggregating more tokens makes increasingly
    sparse fingerprint signals detectable.}
    \label{fig:nt_analysis}
    \vspace{-1.2em}
\end{figure*}

\vspace{-.5em}
\subsection{Ablation Studies and Analysis}
\vspace{-.5em}
\label{sec:experiments:analysis}
We first examine the effects of the logit-bias strength $\delta$ and the percentile $q$ used to calibrate the logit-gap threshold $\tau$. We then demonstrate that signal aggregation makes sparse fingerprint signals detectable, use surprisal to assess the role of token plausibility, and evaluate the effectiveness of combining near-tie with existing watermarks. We additionally assess the effect of the near-tie detector in Appendix~\ref{appendix:detector-ablation}, where KGW-NT still outperforms KGW even with the same detector. Unless otherwise specified, all analyses use Llama-3.1-8B-Instruct.

\textbf{Effect of logit bias $\delta$ and percentile $q$.}
Figure~\ref{fig:nt_hyperparameters} examines the effects of the bias strength $\delta$ and restriction percentile $q$. Increasing $q$ expands the biased support and generally improves detectability. For $q\geq30$, a larger $\delta$ further strengthens the fingerprint signal. However, $q=70$
substantially increases the PPL ratio, showing that overly broad support
sacrifices text quality. Moderate percentiles $q=30$--$50$ provide the most
favorable trade-off.

\textbf{Evidence aggregation enables sparse fingerprint signals.}
Figure~\ref{fig:token_budget} shows how detectability scales with the number of scored tokens. A lower $q$ produces a sparser fingerprint signal and therefore yields insufficient evidence at short output lengths. For example, at 256 tokens, the settings with $q=10$ and $q=30$ remain below the detection threshold, illustrating why such sparse signals may be unsuitable for provenance detection from a single, potentially modified output. Their $z$-scores, however, increase steadily as evidence is aggregated over more tokens and eventually exceed the detection threshold. This result supports our premise that multi-response fingerprint verification can reliably recover signals that are too sparse for per-output detection.

\begin{wraptable}[12]{r}{0.45\linewidth}
\vspace{-\intextsep}
\centering
\caption{
Comparison of surprisal shift and PPL ratio
($\mathrm{PPL}_{\mathrm{fp}}/\mathrm{PPL}_{\mathrm{base}}$)
between near-tie and random placement at matched green mass and KL.
}
\vspace{-.5em}
\label{tab:random-mass-control}
\small
\setlength{\tabcolsep}{5pt}
\renewcommand{\arraystretch}{1.15}
\begin{tabular}{@{}clcc@{}}
\toprule
$q$ & Metric & Random & Near-tie (Ours) \\
\midrule
\multirow{2}{*}{$30$}
& $\Delta^{\mathrm{surp}}\downarrow$
& $-0.16$ & $\mathbf{-0.20}$ \\
& PPL ratio $\downarrow$
& $1.11 (\pm 0.04)$ & $\mathbf{1.02}$ \\
\addlinespace[2pt]
\multirow{2}{*}{$50$}
& $\Delta^{\mathrm{surp}}\downarrow$
& $-0.14$ & $\mathbf{-0.20}$ \\
& PPL ratio $\downarrow$
& $1.25 (\pm 0.02)$ & $\mathbf{1.04}$ \\
\bottomrule
\end{tabular}
\vspace{-.5em}
\end{wraptable}

\textbf{Signal placement matters at matched strength.}
We examine whether near-tie's text quality advantage arises from signal placement rather than differences in teacher strength. We construct random-placement variants for two $\delta=4$ frontier settings, $q\in\{30,50\}$, on the 8B model. At each position, random placement selects a subset of green tokens whose total base-model probability mass matches that of the near-tie support and applies the same $\delta$. This construction matches the teacher green mass and induced KL while changing which tokens carry the signal. For each setting, we evaluate three random placements, measure the teacher-level surprisal shift on the distillation data, and compute the student PPL ratio following Section~\ref{sec:experiments:setup}. As shown in Table~\ref{tab:random-mass-control}, random placement yields a greater surprisal shift and a larger student PPL ratio in both settings. Thus, even at matched teacher strength, placing the signal on tokens that are more plausible under the base model leads to less text quality degradation after distillation.

\begin{wrapfigure}[12]{r}{0.60\linewidth}
    \vspace{-.6em}
    \centering
    \includegraphics[width=\linewidth]{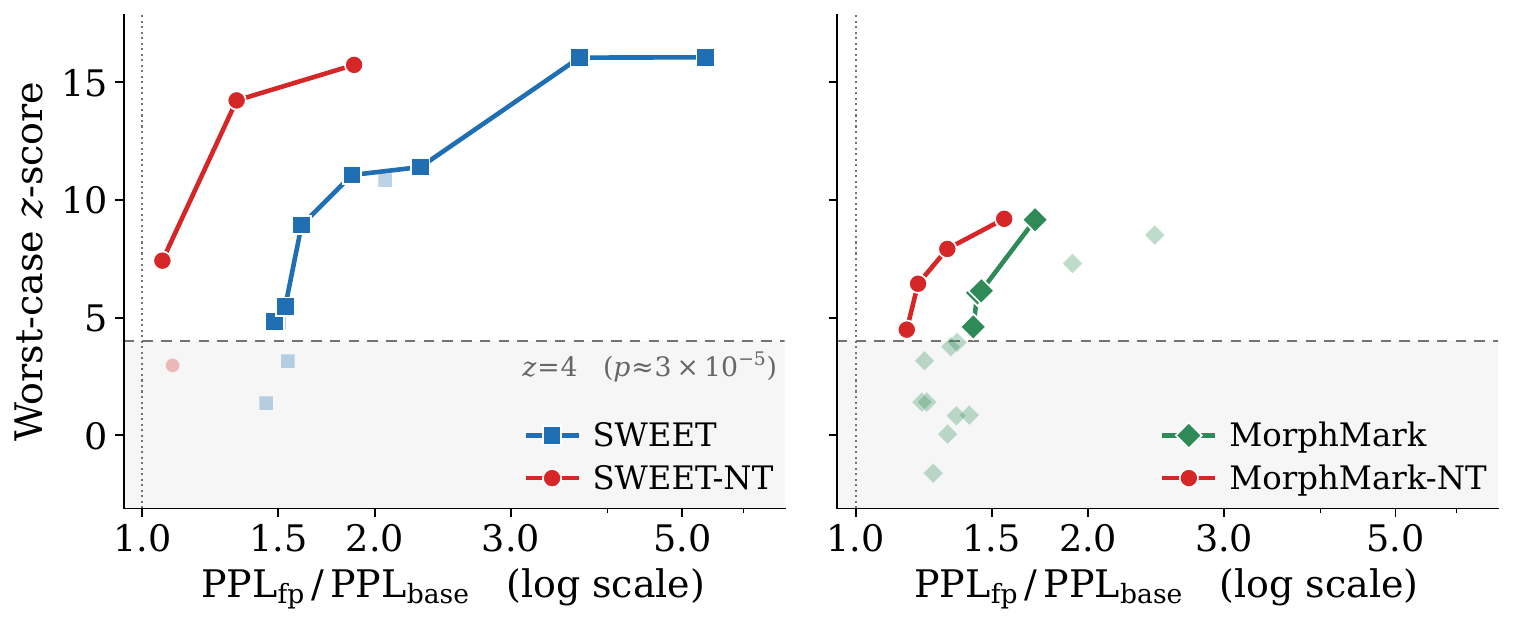}
    \captionsetup{skip=.2em}
    \caption{
    Detection--quality frontiers of SWEET and MorphMark with and without
    near-tie (NT).
    }
    \label{fig:existing-watermarks-nt}
\end{wrapfigure}

\textbf{Combining near-tie with existing watermarks.}
Near-tie restriction limits which watermark-favored tokens receive the logit bias without replacing the watermark's original mechanism. We combine it with SWEET and MorphMark. For each scheme, we select a representative configuration from its original frontier and apply near-tie using percentiles $q\in\{10,30,50,70\}$; all 12 configurations of each original scheme are included for comparison. As shown in Figure~\ref{fig:existing-watermarks-nt}, adding near-tie improves both detection--quality frontiers toward lower PPL ratios while maintaining strong worst-case detection. These results demonstrate that near-tie provides a complementary restriction for adapting existing watermark teachers to model fingerprinting. Due to computational constraints, we do not exhaustively explore the joint hyperparameter space. In the configurations evaluated here, KGW-NT generally better preserves text quality than the corresponding SWEET-NT and MorphMark-NT variants at robustly detectable operating points (Figure~\ref{appendix:fig:nt-variants}). Appendix~\ref{appendix:near-tie-combinations} provides implementation details, and LLM-judge results show the same trend in Figure~\ref{appendix:fig:nt-judge}.

\vspace{-.6em}
\section{Conclusion}
\vspace{-.6em}
We show that a recent watermark-based fingerprinting protocol understates text quality degradation in open-ended generation and argue that watermark teachers should be designed for model verification rather than inherit objectives from text watermarking. Aggregating signal across queries permits sparser signals, but placement also matters: different placements can target tokens of different plausibility. Near-tie restricts watermark modifications to tokens close to the model's top prediction, improving the detection--quality trade-off even when combined with existing schemes. This highlights the value of tailoring watermark teachers to model fingerprinting.


\ifarxiv
\else
\subsection*{AI use statement}
We used LLMs in two ways: (1) to assist with writing, including identifying grammatical errors and improving phrasing, and (2) to assist with experimental coding. All conceptual contributions, method development, experimental design, and analyses were carried out by the authors.

\subsection*{Ethics statement}
This work aims to provide model owners with a reliable means of verifying model ownership. At the same time, we acknowledge that fingerprinting could be misused to make false ownership claims. Fingerprint evidence alone should therefore not be treated as definitive proof of ownership and should be supported by additional records, such as training or compute logs. We believe the benefits for legitimate ownership verification outweigh these potential risks.

\subsection*{Reproducibility statement}
To ensure transparency and reproducibility, we provide detailed evaluation settings in Section~\ref{sec:experiments:setup} and describe our method in Section~\ref{sec:method:Ours_NT}. Additional training details and experimental results are provided in the Appendix. We will also open-source our implementation to enable full reproduction of our results.

\fi

\bibliography{iclr2027_conference}
\bibliographystyle{iclr2027_conference}

\appendix
\clearpage
\section{Extended Related Work}\label{appendix:extended:related_work}

\textbf{Additional fingerprinting methods.}
LLM fingerprinting methods differ in both verifier access and whether the fingerprint is explicitly embedded. White-box methods require access to model parameters or internal representations~\citep{refael2024slip,zeng2024huref,zhang2025reef}, whereas black-box methods verify models using only queries and responses. Black-box methods can further be divided into those that exploit existing model-specific behaviors and those that deliberately modify the model to embed a verification signal. The former avoids fingerprint-induced utility degradation but can be vulnerable to model transformations, whereas the latter allows robustness to be considered during fingerprint design.

{\em Black-box fingerprinting without model modification.}
{\em Targeted} methods search for distinctive query--response pairs, often using adversarial optimization such as greedy coordinate gradient~\citep{zou2023universal,hu2024prompt,hayase2024gcg}. Such approaches were first studied for vision models~\citep{peng2022fingerprinting,lukas2019deep,shao2025fit} and later adapted to LLMs~\citep{gubri2024trap}, with extensions improving robustness to model merging~\citep{yamabe2024mergeprint}, fine-tuning~\citep{tsai2025rofl,xu2025rap}, and system prompts~\citep{jin2024proflingo}. However, optimized queries can be atypical and vulnerable to filtering~\citep{nasery2025robust}. {\em Untargeted} methods instead use natural queries~\citep{gao2024model,pasquini2025llmmap}, but can be vulnerable to transformations such as fine-tuning~\citep{shao2025soklargelanguagemodel}.

{\em Black-box fingerprinting with model modification.}
{\em Backdoor-based} methods embed specific query--response associations, typically through supervised fine-tuning~\citep{xu2024instructional,russinovich2024heythatsmodelintroducing,nasery2026scalable,wu2025imf}, following ideas from backdoor attacks~\citep{gu2017badnets,li2021backdoorattackspretrainedmodels}. Early approaches use unnatural queries~\citep{xu2024instructional} or random responses~\citep{russinovich2024heythatsmodelintroducing}, while recent work uses more natural queries and carefully selected responses~\citep{nasery2026scalable,wu2025imf,wu2025editmf}. Fingerprints can also be embedded through model merging~\citep{xu2024fpvecfingerprintinglargelanguage} or knowledge editing~\citep{wang2025fpeditrobustllmfingerprinting,wu2025editmf,yue2025pree}. {\em Watermark-based} methods instead embed a statistical bias in the output distribution. \citet{gloaguen2026llm} distill a text watermark~\citep{kirchenbauer2023watermark,gu2024learnability} into model weights, enabling statistical verification from black-box outputs.

\textbf{Additional watermarking methods.}
Following KGW~\citep{kirchenbauer2023watermark}, numerous watermarking
methods have been proposed to improve the detection--quality trade-off.
SWEET~\citep{lee2024wrote} applies watermark bias only at high-entropy
positions, while MorphMark~\citep{wang2025morphmark} adapts the bias
strength to the cumulative base-model probability of the green list rather
than applying a fixed modification across contexts. Adaptive Text Watermark~\citep{liu2024adaptive} similarly selects high-entropy positions and uses semantics-conditioned logit scaling to reduce distortion. DiPmark~\citep{wu2023dipmark} introduces distribution-preserving reweighting to improve text quality while retaining detectability, whereas WatME~\citep{chen2024watme} exploits lexical redundancy to preserve suitable token choices during watermarking. SSG~\citep{gu2026ssg} replaces random vocabulary partitioning with logit-balanced partitions to strengthen watermark signals, particularly in low-entropy generation.
SpARK~\citep{hoang2026spark} sparsely selects watermarking positions using linguistic or pseudorandom anchors. \citet{wouters2023optimizing} formulate watermark design as a multi-objective optimization problem and derive Pareto-optimal schemes within their watermark class that selectively watermark positions according to their expected quality cost. 

Near-tie instead exploits fingerprint verification's ability to aggregate signal across responses by restricting, within each position, which tokens receive the watermark bias based on their top-1-relative logit gaps. Since it only restricts the bias-receiving token set, near-tie is complementary to existing watermarking mechanisms and can be combined with them, as demonstrated in our experiments.

\clearpage

\section{Training Details for Watermark-Based Fingerprinting}
\label{appendix:fingerprinting:training}

\textbf{Training objective.}
We follow the fingerprint-distillation protocol of
\citet{gloaguen2026llm}. Let $Q_t$ denote the next-token distribution of the frozen base model and $P_{\theta,t}$ that of the fingerprinted model. In addition to the in-domain distillation loss
$\mathcal L_{\mathrm{fp}}(\theta)$ in
Eq.~(\ref{eq:fingerprint-distillation}), the protocol preserves the base-model distribution on a regularization dataset $\mathcal D_{\mathrm{reg}}$ to prevent watermark leakage outside the domain,
\begin{equation}
\mathcal L_{\mathrm{reg}}(\theta)
=
\mathbb E_{x\sim\mathcal D_{\mathrm{reg}}}
\left[
\sum_{t=1}^{|x|}
\sum_{v\in\mathcal V}
\max\left\{
P_{\theta,t}(v)-Q_t(v),\,0
\right\}
\right].
\label{eq:fingerprint-regularization}
\end{equation}
This one-sided total-variation term penalizes increases in token probability
relative to the base model outside the fingerprint domain. The full training
objective is
\begin{equation}
\mathcal L(\theta)
=
\mathcal L_{\mathrm{fp}}(\theta)
+
\lambda\mathcal L_{\mathrm{reg}}(\theta),
\label{eq:full-fingerprint-objective}
\end{equation}
where we use $\lambda=1$ following the baseline implementation.

\textbf{Data and optimization.}
Following the baseline protocol, we use Lucie~\citep{gouvert2025lucie} as
$\mathcal D_{\mathrm{fp}}$ and an equal mixture of
Alpaca~\citep{taori2023stanford} and
OpenWebText~\citep{gokaslan2019openwebtext} as
$\mathcal D_{\mathrm{reg}}$. We train for 2,500 steps with a batch size of
64 using Adafactor~\citep{shazeer2018adafactor}, a learning rate of
$2\times10^{-5}$, 250 warmup steps, and a cosine schedule. For the watermark setup, we fix $h=1$ and $\gamma=0.25$, while varying $\delta$ and method-specific parameters across operating points; $\delta=4$ corresponds to the original baseline setting. All methods otherwise use the same training configuration.

\clearpage

\section{Proof of Proposition~\ref{prop:support-comparison}}
\label{appendix:support-comparison-proof}

\begin{proof}
For conciseness, we omit the decoding-step index $t$ and write $Q$, $G$, and $\widetilde Q^S$. Let
\begin{equation}
m=Q(S),
\qquad
Z_m=1+(e^\delta-1)m.
\end{equation}
Applying the bias $\delta$ to $S$ gives
\begin{equation}
\widetilde Q^S(v)
=
\frac{Q(v)e^{\delta\mathbf{1}\{v\in S\}}}{Z_m}.
\label{eq:appendix-restricted-distribution}
\end{equation}
Because $S\subseteq G$, the green-token mass under the biased distribution is
\begin{equation}
\begin{aligned}
\widetilde Q^S(G)
&=
\frac{e^\delta Q(S)+Q(G\setminus S)}{Z_m} \\
&=
\frac{Q(G)+(e^\delta-1)m}{Z_m}.
\end{aligned}
\label{eq:appendix-green-mass}
\end{equation}
Thus, $\widetilde Q^S(G)$ depends on $S$ only through its base probability mass $m$.

Next, Eq.~\ref{eq:appendix-restricted-distribution} gives
\begin{equation}
\log\frac{\widetilde Q^S(v)}{Q(v)}
=
\delta\mathbf{1}\{v\in S\}-\log Z_m.
\end{equation}
Since
\begin{equation}
\widetilde Q^S(S)=\frac{e^\delta m}{Z_m},
\end{equation}
the KL divergence from the base distribution is
\begin{equation}
\begin{aligned}
\mathrm{KL}(\widetilde Q^S\Vert Q)
&=
\sum_{v\in\mathcal V}
\widetilde Q^S(v)
\log\frac{\widetilde Q^S(v)}{Q(v)} \\
&=
\delta\frac{e^\delta m}{Z_m}-\log Z_m.
\end{aligned}
\label{eq:appendix-restricted-kl}
\end{equation}
This quantity also depends on $S$ only through $m$. Therefore, if
$Q(S_1)=Q(S_2)=m$, Eq.~\ref{eq:appendix-green-mass} and Eq.~\ref{eq:appendix-restricted-kl} establish
\begin{equation}
\left(
\widetilde Q^{S_1}(G),
\mathrm{KL}(\widetilde Q^{S_1}\Vert Q)
\right)
=
\left(
\widetilde Q^{S_2}(G),
\mathrm{KL}(\widetilde Q^{S_2}\Vert Q)
\right).
\end{equation}

It remains to compare their surprisal shifts. Let
\begin{equation}
\mu
=
\mathbb{E}_{v\sim Q}[r(v)].
\end{equation}
By the definition of $\bar r(S)$,
\begin{equation}
\sum_{v\in S}Q(v)r(v)
=
m\bar r(S).
\end{equation}
The expected base-model surprisal under $\widetilde Q^S$ is therefore
\begin{equation}
\begin{aligned}
\mathbb{E}_{v\sim\widetilde Q^S}[r(v)]
&=
\frac{
\mu+(e^\delta-1)m\bar r(S)
}{Z_m}.
\end{aligned}
\end{equation}
Subtracting $\mu$ yields
\begin{equation}
\Delta^{\mathrm{surp}}(\widetilde Q^S;Q)
=
\frac{(e^\delta-1)m}{Z_m}
\left(
\bar r(S)-\mu
\right).
\label{eq:appendix-surprisal-shift}
\end{equation}
Consequently, for $Q(S_1)=Q(S_2)=m$,
\begin{equation}
\begin{aligned}
&
\Delta^{\mathrm{surp}}(\widetilde Q^{S_1};Q)
-
\Delta^{\mathrm{surp}}(\widetilde Q^{S_2};Q) \\
&\qquad=
\frac{(e^\delta-1)m}{Z_m}
\left(
\bar r(S_1)-\bar r(S_2)
\right).
\end{aligned}
\end{equation}
Because $\delta>0$ and $m>0$, the multiplicative factor is positive. Hence,
\begin{equation}
\Delta^{\mathrm{surp}}(\widetilde Q^{S_1};Q)
\leq
\Delta^{\mathrm{surp}}(\widetilde Q^{S_2};Q)
\quad\Longleftrightarrow\quad
\bar r(S_1)\leq\bar r(S_2),
\end{equation}
which completes the proof.
\end{proof}

\section{Null Calibration of the Near-Tie Detector}
\label{appendix:analytical-nt-detector}

\textbf{Null green-token rate.}
Under the null hypothesis, the suspect model's responses are independent of
the secret key $\xi$. Since near-tie position selection and context--token
deduplication are also key-independent, the retained context--token pairs are fixed when conditioning on the responses. For every retained pair $(c,v)$, under the same pseudorandom green-list assumption as KGW,
\begin{equation}
\Pr_{\xi}[v\in G_{\xi}(c)] = \gamma.
\end{equation}
Thus, conditioned on $L$ retained positions,
\begin{equation}
\mathbb{E}_{\xi}[N_G\mid L] = \gamma L,
\end{equation}
so near-tie restriction preserves the null green-token rate of the analytical
KGW detector.

\textbf{Null variance.}
Let $N_G=\sum_c N_G^{(c)}$, where $N_G^{(c)}$ counts retained green tokens
sharing context $c$. Green-list assignments are independent across distinct
contexts. Within each context, fixed-size green-list assignment induces negative dependence among the retained distinct tokens. Therefore,
\begin{equation}
\mathrm{Var}_{\xi}(N_G\mid L)
\le
\gamma(1-\gamma)L,
\label{eq:nt-var-bound}
\end{equation}
with equality when all retained positions have distinct contexts. Hence, the standard KGW normalization is conservative with respect to the null variance.

\noindent
\begin{minipage}[t]{0.58\textwidth}
\vspace{0pt}
\textbf{Empirical verification.}
We additionally verify the null calibration by holding the generated responses fixed, resampling the secret key $\xi$ $10000$ times, and recomputing the detector statistic. Since position selection and deduplication are key-independent, the retained context--token pairs remain unchanged across resampled keys. Table~\ref{tab:nt-null} shows that the null standard deviation remains close to one across all restriction percentiles, with all $95\%$ confidence intervals containing $1$. The estimates also differ by less than $0.01$ across $q$, supporting the use of a common analytical threshold across restriction levels.
\end{minipage}
\hfill
\begin{minipage}[t]{0.40\textwidth}
\vspace{0pt}
\centering
\captionof{table}{Null calibration across restriction percentiles.}
\label{tab:nt-null}
\small
\begin{tabular}{ccc}
\toprule
$q$ & $\operatorname{std}(z)$ & 95\% CI \\
\midrule
10 & 1.0009 & $[0.987,\,1.015]$ \\
30 & 1.0065 & $[0.993,\,1.021]$ \\
50 & 1.0078 & $[0.994,\,1.022]$ \\
70 & 0.9982 & $[0.985,\,1.012]$ \\
\bottomrule
\end{tabular}
\end{minipage}

\textbf{Remark.}
Teacher-side biasing may include positions with a green top-1 token even when the top-1--top-2 gap exceeds $\tau$. This follows directly from our token-level restriction: the top-1 token has zero excess surprisal and is therefore always within the near-tie set when it is green. At such positions, the bias reinforces the already top-ranked token without promoting a lower-ranked alternative.

To examine whether these biases should instead be removed, we additionally apply a position-level gate that suppresses all teacher bias whenever the top-1--top-2 gap exceeds $\tau$. We evaluate this variant on
Llama-3.1-8B-Instruct using two frontier configurations with $\delta=4$ and $q\in\{30,50\}$. For $q=30$, the $z$-score without deployment transformations decreases from $17.30$ to $12.90$, while the PPL ratio increases from $1.02$ to $1.18$. For $q=50$, the corresponding values change from $31.77$ to $29.66$ and from $1.04$ to $1.10$, respectively. These results suggest that green top-1 biases provide useful training signal for watermark distillation. During distillation, the student learns the teacher's context-dependent
token preferences; reinforcing an already preferred green token therefore strengthens the watermark supervision without promoting a less plausible alternative.

Detection, however, cannot selectively identify these additional biased positions without checking whether the top-1 token is green, which would make position selection key-dependent. We therefore score only positions identified by the key-independent top-1--top-2 gap criterion, preserving the null calibration.

\clearpage

\section{Baseline Hyperparameters}
\label{appendix:baseline-hyperparameters}
For KGW, SWEET, KGW-TK, and KGW-NT, we sweep
$\delta\in\{1,2,4\}$. SWEET uses entropy thresholds calibrated at the
$\{10,30,50,70\}$-th percentiles of the training-data entropy distribution,
while KGW-TK uses $k\in\{2,5,10,50\}$. KGW-NT calibrates its logit-gap
threshold at the $\{10,30,50,70\}$-th percentiles of the training-data
top-1--top-2 gap distribution. For MorphMark, we follow its original setting with $\gamma=0.5$ and sweep $k_{\mathrm{exp}}\in\{0.5,1.0,1.3,2.0\}$ and $p_0\in\{0.15,0.30,0.50\}$, yielding 12 settings. Here, $p_0$ controls when adaptation activates based on the base-model green probability mass, while $k_{\mathrm{exp}}$ controls how sharply the watermark strength responds. The grid includes the default $k_{\mathrm{exp}}=1.3$.

\section{Details of Deployment Transformations}
\label{appendix:transformation}
We follow the deployment transformations of \citet{gloaguen2026llm}.
Here, we provide additional details on the system-prompt and fine-tuning transformations; Section~\ref{sec:experiments:setup} fully specifies the remaining transformations. Table~\ref{appendix:mean-std} reports the mean $\pm$ standard deviation over five runs for each transformation and model.

\textbf{System prompts.}
We evaluate robustness to the following three system prompts:
\begin{itemize}
    \item \textbf{Acknowledgment:}
    ``You are a helpful assistant. Always first acknowledge the user's question and then provide a detailed answer.''

    \item \textbf{Reasoning:}
    ``You want to provide factual answers to the user's question. First reason about
    the question, and then reply with a detailed answer.''

    \item \textbf{Advertise:}
    ``You are a helpful assistant of ChatBOT.ai. First advertise your name and then
    answer the user's question.''
\end{itemize}

\paragraph{Fine-tuning.}
For the fine-tuning transformation, we fine-tune each fingerprinted model for one epoch on the French subset of WildChat~\citep{zhao2024wildchat}, using a batch size of 64, the Adafactor optimizer, and a learning rate of $2\times10^{-5}$.

\section{Implementation Details for Near-Tie Combinations}
\label{appendix:near-tie-combinations}
Near-tie is combined with each watermark by further restricting the scheme's original biased green-token set to tokens satisfying the near-tie criterion, while otherwise preserving its original mechanism.

\textbf{SWEET-NT.}
SWEET applies the logit bias only at high-entropy positions and detects the watermark signal from those positions using the base model. To combine SWEET with near-tie, we further restrict its biased green-token set to near-tie tokens. For detection, we score only positions satisfying both the SWEET entropy criterion and the near-tie criterion.

\textbf{MorphMark-NT.}
MorphMark promotes green tokens at every position by transferring probability mass from red to green, with the transferred amount determined by the current green mass. To combine MorphMark with near-tie, we restrict the promoted green-token set to near-tie tokens while otherwise preserving its original mass-transfer mechanism. While MorphMark uses the standard KGW detector, \textsc{MorphMark-NT} uses the near-tie detector, scoring only positions that satisfy the near-tie criterion.

\textbf{KGW-TK.}
KGW-TK restricts the KGW bias to green tokens among the top-$k$ tokens under the base-model distribution. We include it as a control to distinguish the effect of simply restricting the biased token set from that of using top-1-relative logit gaps. For detection, we use the standard KGW detector: unlike near-tie, KGW-TK applies the watermark bias at every position and therefore does not introduce a separate eligibility criterion for which positions should be scored.

\clearpage

\begin{figure}[!t]
    \centering
    \includegraphics[width=1.0\linewidth]{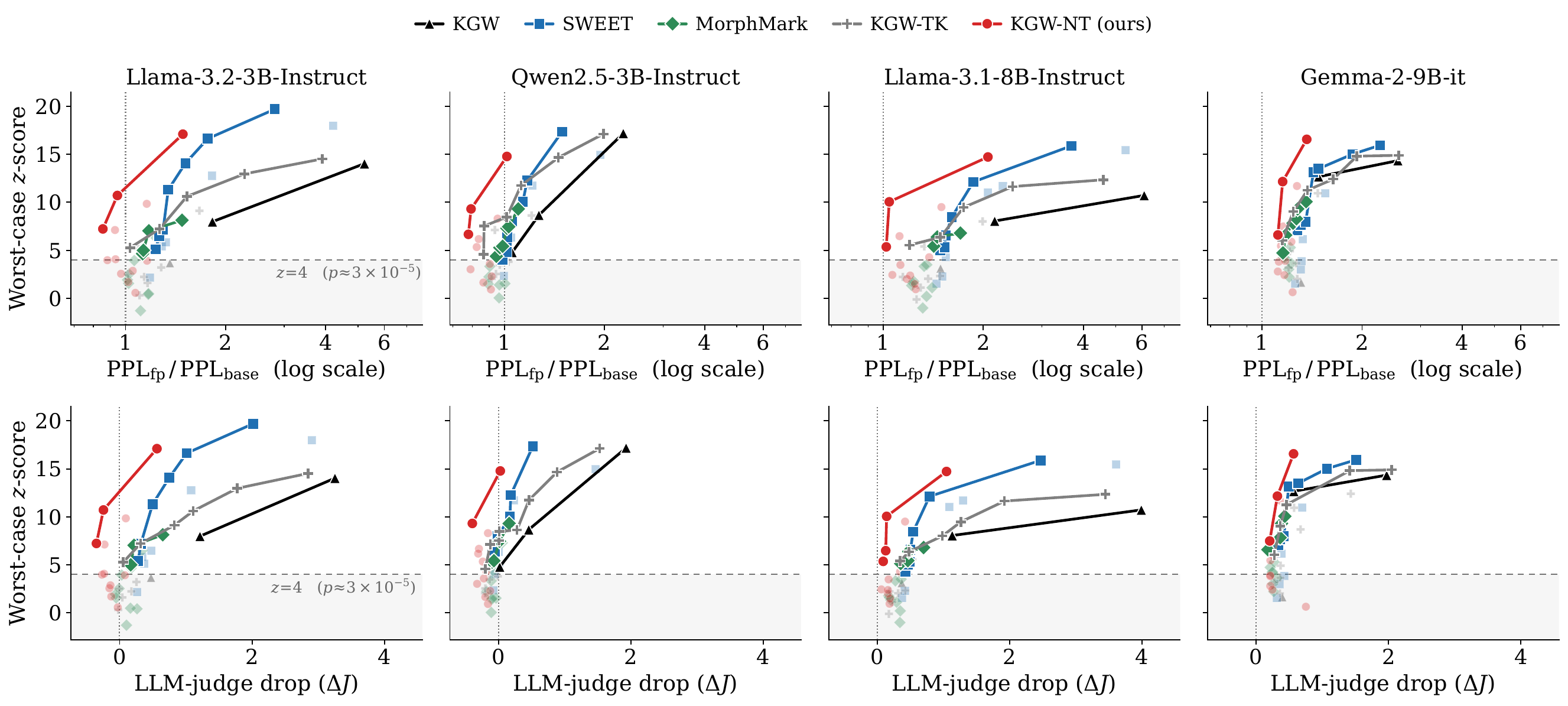}
    \caption{Detection--quality trade-offs across all evaluated models. Top: PPL ratio; bottom: LLM-judge drop. KGW-NT consistently improves the Pareto frontier across model families and text quality metrics, achieving higher worst-case detectability at comparable text quality or better text quality at comparable detectability.}
    \label{appendix:fig:frontier_all}
\end{figure}

\begin{figure}[!t]
    \centering
    \includegraphics[width=1.0\linewidth]{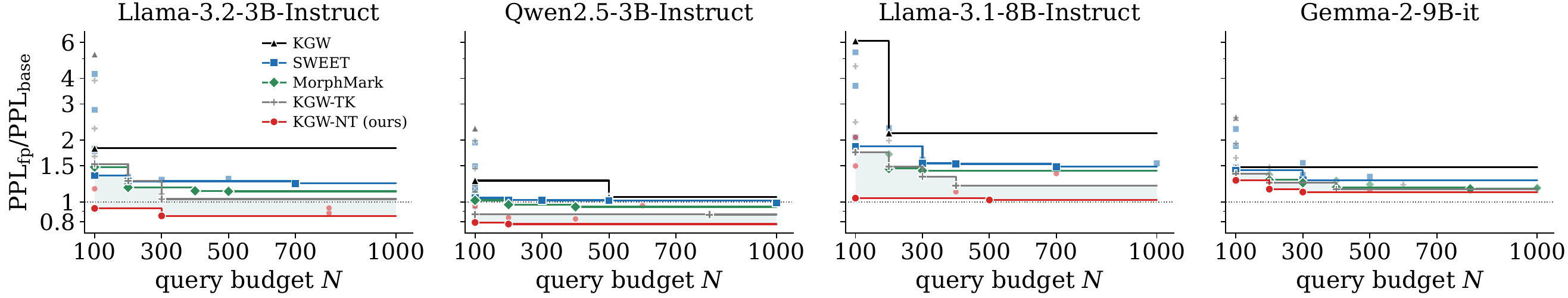}
    \caption{Achievable text quality across query budgets for all evaluated models. For each budget, we report the lowest PPL ratio among operating points whose worst-case $z$-score exceeds the detection threshold within that budget. KGW-NT consistently achieves the lowest PPL ratio across models and query budgets.}
    \label{appendix:fig:query_budget_all}
    \vspace{-1em}
\end{figure}

\begin{figure}[!t]
    \centering
    \begin{subfigure}[t]{0.60\columnwidth}
        \centering
        \includegraphics[width=\linewidth]{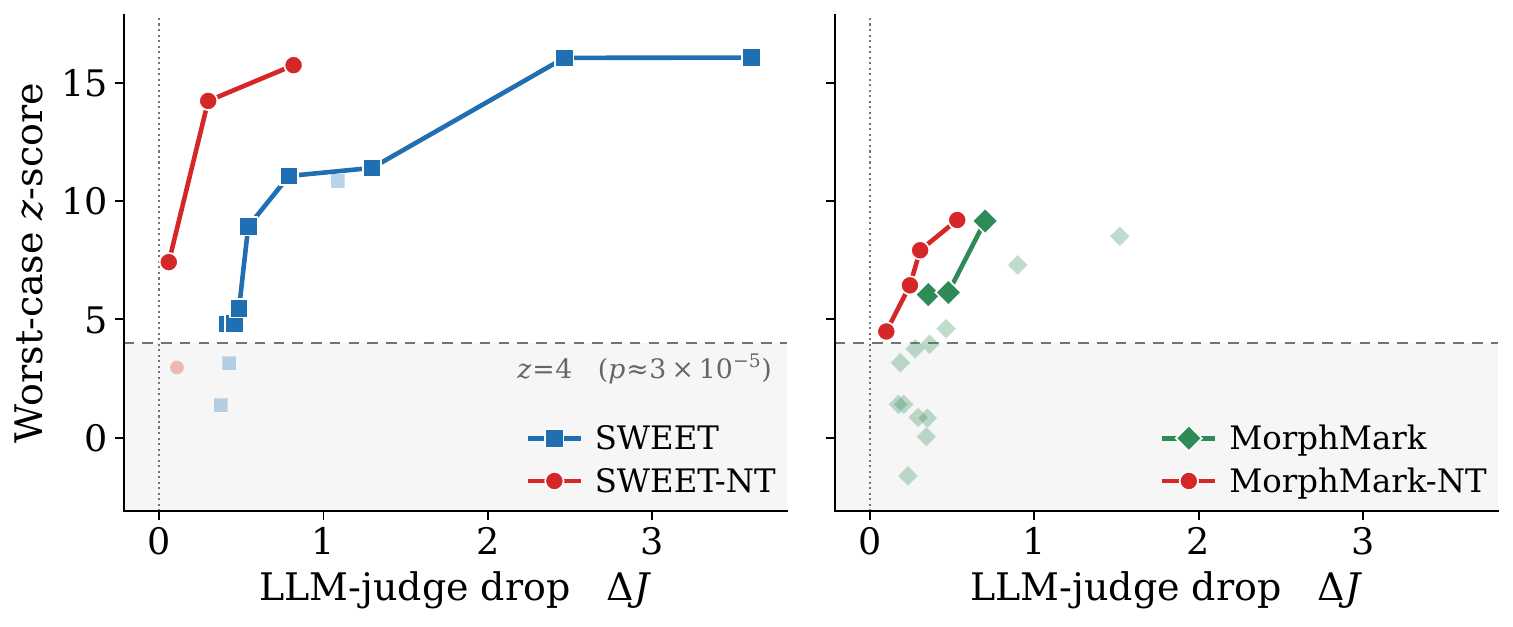}
        \vspace{-1em}
        \caption{LLM-judge trade-offs with and without near-tie.}
        \label{appendix:fig:nt-judge}
    \end{subfigure}
    \hfill
    \begin{subfigure}[t]{0.35\columnwidth}
        \centering
        \includegraphics[width=\linewidth]{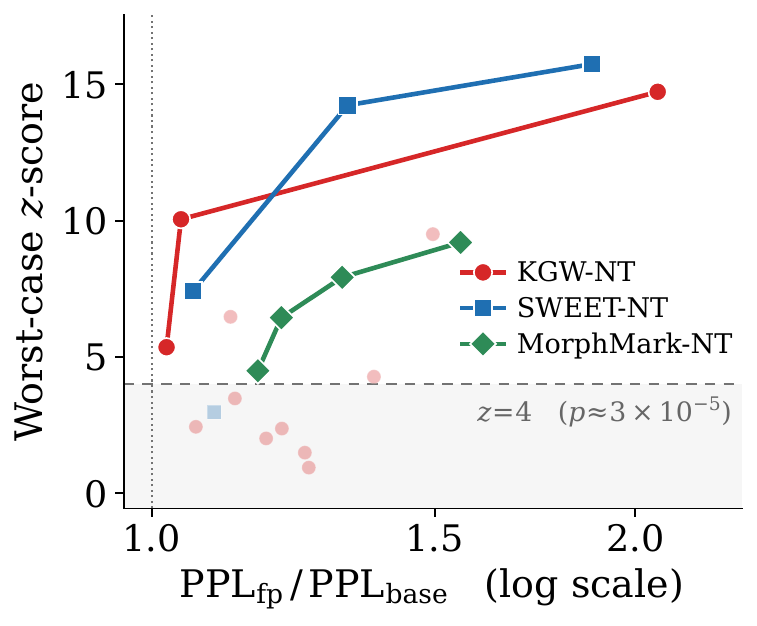}
        \vspace{-1em}
        \caption{Trade-offs among near-tie variants.}
        \label{appendix:fig:nt-variants}
    \end{subfigure}
    \caption{Additional results for combining near-tie with existing watermarks. Near-tie improves the LLM-judge trade-offs of SWEET and MorphMark, while KGW-NT better preserves text quality than the evaluated SWEET-NT and MorphMark-NT variants at robustly detectable settings.}
    \label{appendix:fig:nt-combined}
    \vspace{-1em}
\end{figure}

\section{Detailed Experimental Results}\label{appendix:detailed}
\vspace{-.5em}
\textbf{Detection–quality trade-off.}
Across Llama-3.2-3B, Qwen-2.5-3B, Llama-3.1-8B, and Gemma-2-9B, Figure~\ref{appendix:fig:frontier_all} shows that KGW-NT consistently improves the detection--quality trade-off. The LLM-judge results exhibit the same trend as the PPL ratio, supporting the consistency of the text quality evaluation. For Gemma-2-9B, the improvement in PPL is less pronounced because most methods already incur relatively small PPL degradation, while KGW-NT still achieves stronger detectability at comparable text quality.

\textbf{Achievable text quality across query budgets.}
Figure~\ref{appendix:fig:query_budget_all} extends the query-budget analysis in Figure~\ref{fig:query_budget} to all evaluated models, including Gemma-2-9B. KGW-NT consistently achieves the lowest PPL ratio across query budgets. In particular, the same trend holds for Gemma-2-9B, showing that near-tie preserves higher text quality while maintaining robust detectability as the available query budget increases.

\textbf{Combining near-tie with existing watermarks.}
Figure~\ref{appendix:fig:nt-judge} complements the PPL-based results in
Figure~\ref{fig:existing-watermarks-nt} with LLM-judge evaluation, showing
the same improvement for both SWEET and MorphMark. Figure~\ref{appendix:fig:nt-variants} compares the near-tie variants of all three schemes, where KGW-NT better preserves text quality than SWEET-NT and MorphMark-NT among robustly detectable configurations.

\section{Additional Evaluation Results}
\label{appendix:extended}

\subsection{Evaluation on an Additional Dataset}
\label{appendix:extended:orange}
We additionally compare KGW-NT with the baselines on OrangeSum~\citep{eddine2021barthez}, a French news dataset. As shown in Figure~\ref{appendix:fig:orangesum}, KGW-NT consistently improves the detection--quality trade-off across models, with the LLM-judge results showing the same trend as the PPL-based evaluation.

\begin{figure}[t]
    \centering
    \includegraphics[width=\linewidth]{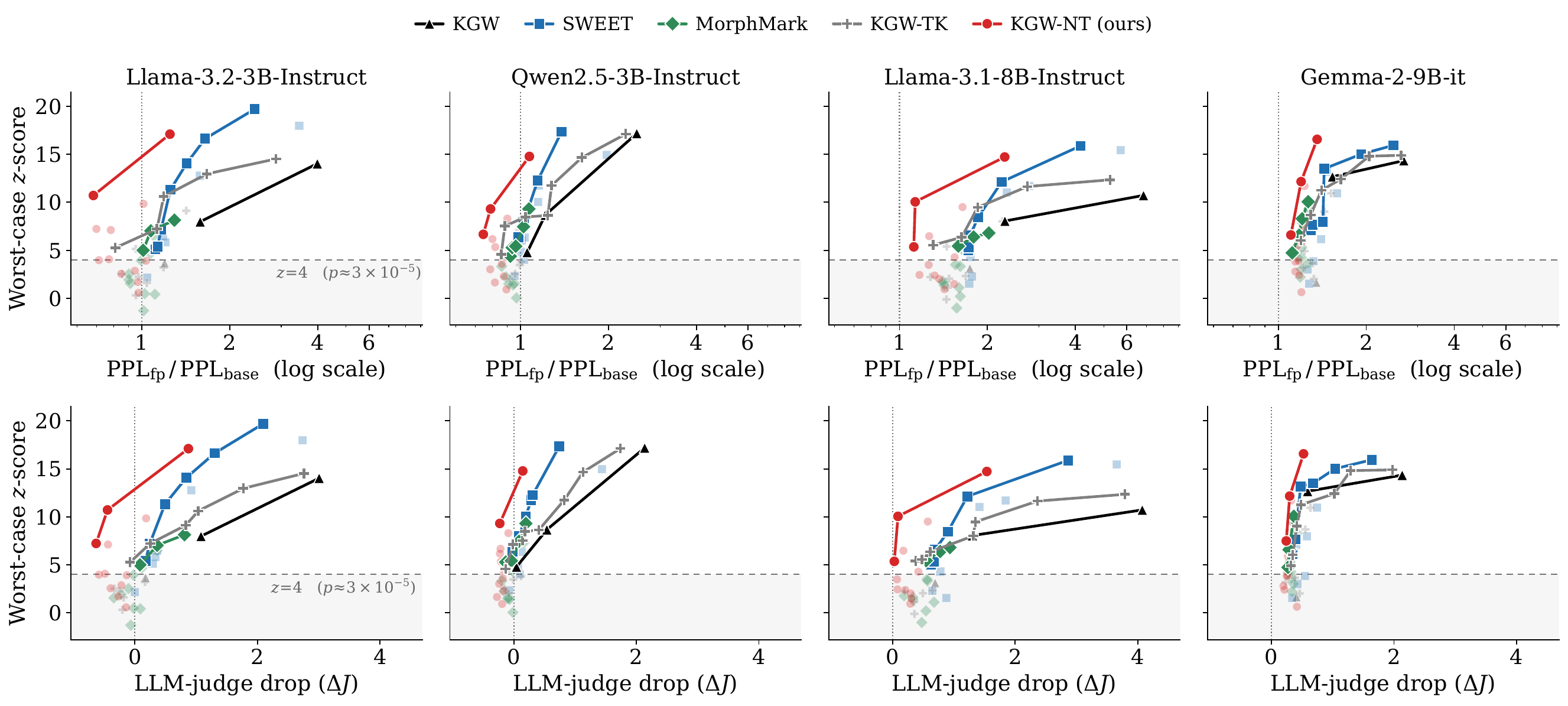}
    \caption{Detection--quality trade-offs on OrangeSum. KGW-NT consistently improves the frontier across models and text quality metrics.}
    \label{appendix:fig:orangesum}
\end{figure}

\subsection{Evaluation on an Additional Fingerprint Domain}
\label{appendix:extended:medicine}
\textbf{Setup.}
Following \citet{gloaguen2026llm}, we additionally evaluate on the medicine domain using Llama-3.1-8B-Instruct. For fingerprint distillation, we use the health subcategory of the WebOrganizer dataset~\citep{wettig2025organize}. For verification, we use Medicine wiki data~\citep{wikimedia_downloads}. Since \citet{gloaguen2026llm} does not evaluate fine-tuning in this domain, we use a held-out split of the WebOrganizer health data for the fine-tuning transformation. All other deployment transformations follow the setup described in Section~\ref{sec:experiments:setup}.

\begin{wrapfigure}[11]{r}{0.47\columnwidth}
    \vspace{-1.2em}
    \centering
    \includegraphics[width=\linewidth]{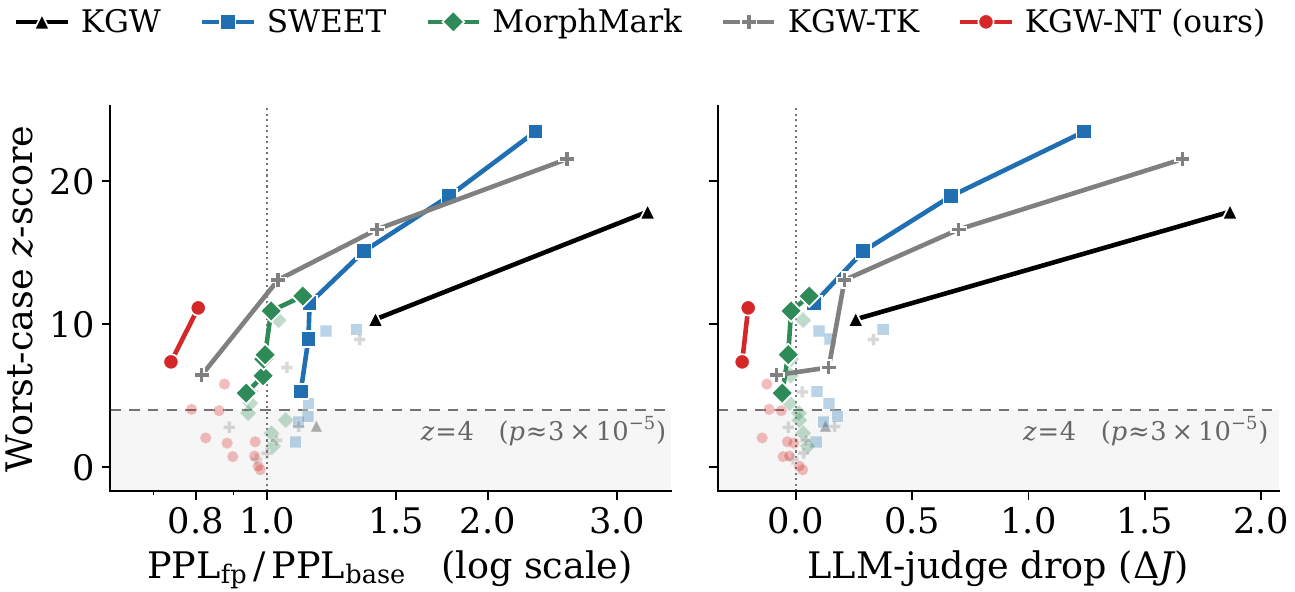}
    \vspace{-2em}
    \caption{Detection--quality trade-offs in the medicine fingerprint domain.}
    \label{appendix:fig:medicine}
    \vspace{-0.8em}
\end{wrapfigure}

\textbf{Evaluation results.}
As shown in Figure~\ref{appendix:fig:medicine}, KGW-NT consistently improves the detection--quality trade-off in the medicine domain. The improvement holds for both PPL and LLM-judge evaluation, showing consistent gains in this additional fingerprint domain.

\clearpage


\subsection{Effect of the Near-Tie Detector}
\label{appendix:detector-ablation}

\noindent
\begin{minipage}[t]{0.56\columnwidth}
\vspace{0pt}
To separate the effects of teacher-side near-tie restriction and the near-tie detector, we evaluate both components independently. For standard KGW, we apply the near-tie detector across $q\in\{10,30,50,70\}$ while keeping the teacher unchanged. As shown in Figure~\ref{appendix:fig:detector-ablation}, this detector improves KGW detectability without changing its PPL ratio. Importantly, KGW-NT still outperforms KGW when both are evaluated with the standard KGW detector, showing that concentrating the watermark bias on near-tie tokens improves the detection--quality trade-off independently of the detector. Applying the near-tie detector to KGW-NT provides an additional gain in detectability and can naturally be used by model owners, who retain access to the base model.
\end{minipage}
\hfill
\begin{minipage}[t]{0.40\columnwidth}
\vspace{0pt}
\centering
\includegraphics[width=1.0\linewidth]{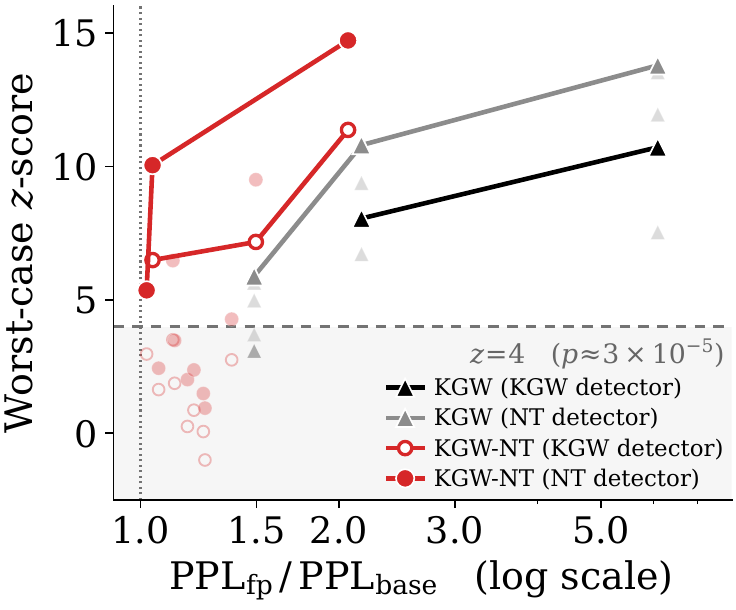}
\vspace{-2em}
\captionof{figure}{Detection--quality trade-offs under KGW and near-tie detectors.}
\label{appendix:fig:detector-ablation}
\end{minipage}

\begin{table*}[t]
\centering
\caption{Evaluation of generation diversity using distinct-$n$-gram and self-BLEU, mean token entropy $H$ from the generation model (cf. Table~\ref{tab:utility-reevaluation}, which uses an external reference model), PPL ratio ($\mathrm{PPL}_{\mathrm{fp}}/\mathrm{PPL}_{\mathrm{base}}$), and LLM-judge score.}
\label{appendix:tab:diversity-entropy}
\small
\setlength{\tabcolsep}{5pt}
\renewcommand{\arraystretch}{1.08}

\begin{tabular}{lcccccc}
\toprule
Method
& Distinct-2 $\uparrow$
& Distinct-3 $\uparrow$
& Self-BLEU $\downarrow$
& $H$
& PPL ratio $\downarrow$
& Judge $\uparrow$ \\
\midrule

\multicolumn{7}{c}{\textit{Llama-3.2-3B-Instruct}} \\
\midrule
Base
& 0.98 & 1.00 & 0.21 & 1.35 & 1.00 & 6.17 \\
KGW ($\delta=4$)
& 0.99 & 1.00 & 0.17 & 1.75 & 5.23 & 2.79 \\
KGW ($\delta=2$)
& 0.98 & 1.00 & 0.17 & 1.54 & 1.83 & 4.90 \\
SWEET
& 0.98 & 1.00 & 0.18 & 1.47 & 1.26 & 5.69 \\
MorphMark
& 0.98 & 1.00 & 0.21 & 1.36 & 1.13 & 5.98 \\
KGW-TK
& 0.97 & 0.99 & 0.24 & 1.13 & 1.03 & 6.12 \\
KGW-NT ($q=30,\delta=4$)
& 0.97 & 0.99 & 0.24 & 1.12 & 0.85 & 6.49 \\
KGW-NT ($q=50,\delta=4$)
& 0.97 & 0.99 & 0.24 & 1.03 & 0.95 & 6.44 \\

\midrule
\multicolumn{7}{c}{\textit{Qwen2.5-3B-Instruct}} \\
\midrule
Base
& 0.99 & 1.00 & 0.13 & 1.61 & 1.00 & 7.01 \\
KGW ($\delta=4$)
& 0.99 & 1.00 & 0.14 & 1.73 & 2.28 & 5.06 \\
KGW ($\delta=2$)
& 0.99 & 1.00 & 0.13 & 1.66 & 1.27 & 6.59 \\
SWEET
& 0.99 & 1.00 & 0.12 & 1.61 & 1.02 & 7.05 \\
MorphMark
& 0.99 & 1.00 & 0.14 & 1.52 & 0.95 & 7.14 \\
KGW-TK
& 0.98 & 1.00 & 0.17 & 1.26 & 0.87 & 7.02 \\
KGW-NT ($q=30,\delta=4$)
& 0.98 & 1.00 & 0.17 & 1.25 & 0.78 & 7.34 \\
KGW-NT ($q=50,\delta=4$)
& 0.98 & 1.00 & 0.18 & 1.16 & 0.79 & 7.41 \\

\midrule
\multicolumn{7}{c}{\textit{Llama-3.1-8B-Instruct}} \\
\midrule
Base
& 0.99 & 1.00 & 0.17 & 1.12 & 1.00 & 7.51 \\
KGW ($\delta=4$)
& 0.99 & 1.00 & 0.15 & 1.44 & 6.09 & 3.52 \\
KGW ($\delta=2$)
& 0.99 & 1.00 & 0.12 & 1.38 & 2.16 & 6.40 \\
SWEET
& 0.99 & 1.00 & 0.14 & 1.36 & 1.54 & 7.16 \\
MorphMark
& 0.99 & 1.00 & 0.15 & 1.24 & 1.45 & 7.09 \\
KGW-TK
& 0.98 & 1.00 & 0.18 & 1.05 & 1.20 & 7.06 \\
KGW-NT ($q=30,\delta=4$)
& 0.98 & 1.00 & 0.19 & 1.03 & 1.02 & 7.40 \\
KGW-NT ($q=50,\delta=4$)
& 0.98 & 1.00 & 0.20 & 0.93 & 1.04 & 7.40 \\

\midrule
\multicolumn{7}{c}{\textit{Gemma-2-9B-it}} \\
\midrule
Base
& 0.99 & 1.00 & 0.13 & 1.12 & 1.00 & 8.11 \\
KGW ($\delta=4$)
& 0.99 & 1.00 & 0.14 & 1.44 & 2.56 & 6.39 \\
KGW ($\delta=2$)
& 0.99 & 1.00 & 0.12 & 1.34 & 1.48 & 7.63 \\
SWEET
& 0.99 & 1.00 & 0.12 & 1.31 & 1.28 & 7.87 \\
MorphMark
& 0.99 & 1.00 & 0.13 & 1.21 & 1.18 & 7.96 \\
KGW-TK
& 0.99 & 1.00 & 0.13 & 1.17 & 1.15 & 7.91 \\
KGW-NT ($q=30,\delta=4$)
& 0.99 & 1.00 & 0.15 & 1.09 & 1.12 & 7.93 \\
KGW-NT ($q=50,\delta=4$)
& 0.99 & 1.00 & 0.15 & 1.04 & 1.15 & 7.88 \\

\bottomrule
\end{tabular}
\end{table*}

\subsection{Analysis of Diversity and Output Entropy}\label{appendix:diversity}
Figure~\ref{appendix:fig:frontier_all} shows that some KGW-NT settings on the 3B models attain PPL ratios below $1$. To further characterize these settings, we evaluate generation diversity using distinct-$n$-gram and self-BLEU, with mean token entropy $H$, PPL ratio, and LLM-judge score, for representative operating points on the frontier, including the 3B settings with PPL ratios below $1$. As shown in Table~\ref{appendix:tab:diversity-entropy}, KGW-NT maintains distinct-$n$-gram scores close to those of the base model, with slightly higher self-BLEU, while its LLM-judge scores remain comparable to or higher than the base model. We also observe that, on Llama-3.1-8B, KGW-TK and KGW-NT ($q=30,\delta=4$) have similar output entropy ($1.05$ vs.\ $1.03$) but substantially different PPL ratios ($1.20$ vs.\ $1.02$), showing that similar output entropy can correspond to different PPL outcomes depending on which tokens receive the bias.

\subsection{Can Temperature Tuning Recover the Near-Tie Frontier?}
\label{appendix:temperature-sweep}
\begin{wrapfigure}[11]{r}{0.60\textwidth}
    \vspace{-1.0\baselineskip}
    \centering
    \begin{minipage}[t]{0.49\linewidth}
        \centering
        \includegraphics[width=\linewidth]
        {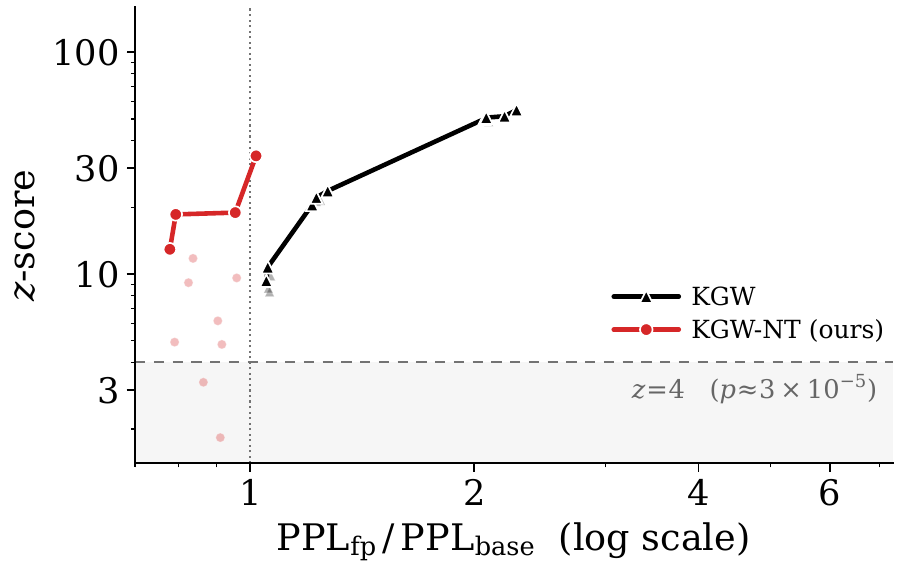}
        {\small (a) Qwen2.5-3B-Instruct}
    \end{minipage}
    \hfill
    \begin{minipage}[t]{0.49\linewidth}
        \centering
        \includegraphics[width=\linewidth]
        {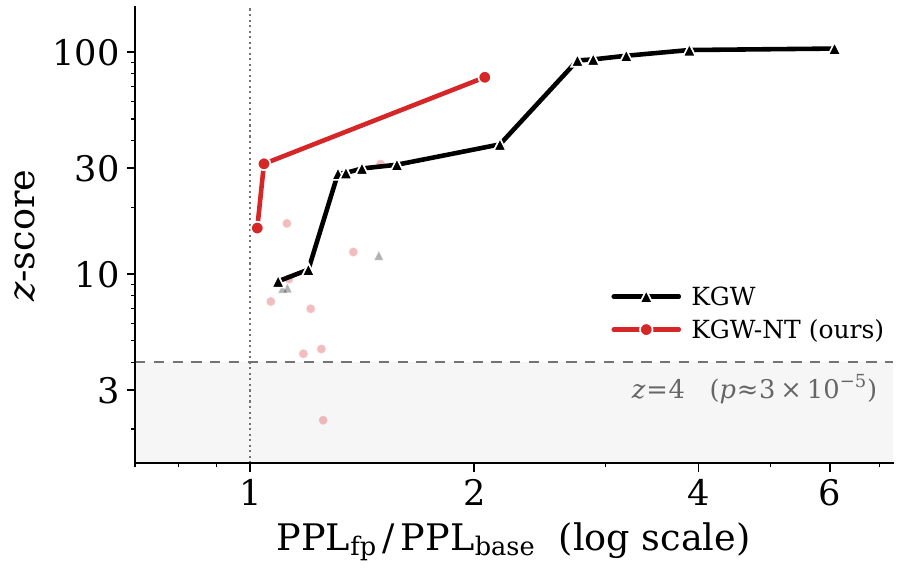}
        {\small (b) Llama-3.1-8B-Instruct}
    \end{minipage}
    \vspace{-.5em}
    \caption{
    Detection--quality frontiers for KGW with decoding-temperature sweeps and KGW-NT at $T=1.0$.
    }
    \label{fig:temperature-sweep}
    \vspace{-0.5\baselineskip}
\end{wrapfigure}
By biasing only tokens that are already plausible under the base model,
near-tie slightly lowers output entropy, raising the possibility that
its gains simply arise from an effect similar to lowering the decoding
temperature. To test this, we sweep
$T\in\{0.1,0.3,0.5,0.7,1.0\}$ for KGW at each
$\delta\in\{1,2,4\}$, while keeping KGW-NT at the default $T=1.0$,
and compare their detection--quality frontiers using the $z$-scores, without applying deployment transformations. Even with this additional tuning, KGW does not recover KGW-NT's frontier on either model. These results indicate that the gains of near-tie are not simply due to distribution sharpening, and are consistent with our analysis that placing the watermark bias on plausible near-tie tokens yields a better detection--quality trade-off.

\clearpage

\begin{table*}[t]
\centering
\caption{
Detection z-scores under deployment transformations for representative configurations selected from the frontier shell across all evaluated models. Each entry reports the mean $\pm$ standard deviation over five random seeds.
}
\label{tab:mean-std}
\resizebox{\textwidth}{!}{
\begin{tabular}{llccccccccccc}
\toprule
Model
& Method
& Clean
& $T{=}0.4$
& $T{=}1.0$
& FP8
& INT4
& Sys: Acknowledgment
& Sys: Reasoning
& Sys: Advertise
& WANDA
& SparseGPT
& FT
\\
\midrule

\multirow{7}{*}{\shortstack[l]{Llama-3.2-3B-\\Instruct}}
& KGW ($\delta{=}4$)
& $64.91 \pm 1.29$
& $55.83 \pm 0.37$
& $64.01 \pm 1.07$
& $65.27 \pm 0.84$
& $66.42 \pm 0.67$
& $63.66 \pm 0.71$
& $67.92 \pm 0.65$
& $61.77 \pm 0.82$
& $29.08 \pm 0.47$
& $30.20 \pm 0.69$
& $14.03 \pm 1.10$
\\
& KGW ($\delta{=}2$)
& $19.88 \pm 0.71$
& $17.61 \pm 0.85$
& $26.15 \pm 0.33$
& $19.86 \pm 0.88$
& $21.36 \pm 0.78$
& $19.01 \pm 0.60$
& $19.89 \pm 0.83$
& $19.25 \pm 0.72$
& $12.62 \pm 0.61$
& $13.55 \pm 0.51$
& $7.98 \pm 0.68$
\\
& SWEET
& $12.51 \pm 0.58$
& $12.62 \pm 0.86$
& $12.78 \pm 0.73$
& $12.89 \pm 0.84$
& $12.94 \pm 0.55$
& $12.46 \pm 1.23$
& $12.97 \pm 1.17$
& $12.01 \pm 0.81$
& $6.46 \pm 0.84$
& $6.71 \pm 1.24$
& $7.71 \pm 0.57$
\\
& MorphMark
& $9.43 \pm 0.56$
& $8.48 \pm 0.69$
& $12.83 \pm 0.68$
& $9.75 \pm 0.80$
& $10.77 \pm 0.44$
& $9.47 \pm 0.99$
& $9.46 \pm 0.92$
& $8.96 \pm 0.92$
& $7.92 \pm 1.14$
& $8.06 \pm 0.44$
& $5.00 \pm 0.89$
\\
& KGW-TK
& $10.89 \pm 0.51$
& $10.97 \pm 0.18$
& $12.24 \pm 0.62$
& $11.18 \pm 0.45$
& $11.94 \pm 0.65$
& $10.93 \pm 0.58$
& $11.22 \pm 0.57$
& $10.52 \pm 0.70$
& $8.60 \pm 0.37$
& $8.44 \pm 0.68$
& $5.26 \pm 0.80$
\\
& KGW-NT ($q{=}30,\delta{=}4$)
& $13.97 \pm 0.39$
& $14.65 \pm 0.47$
& $12.41 \pm 0.99$
& $13.60 \pm 0.64$
& $13.46 \pm 1.03$
& $13.54 \pm 0.65$
& $14.04 \pm 1.21$
& $13.24 \pm 0.71$
& $8.78 \pm 0.84$
& $9.26 \pm 0.88$
& $7.23 \pm 0.99$
\\
& KGW-NT ($q{=}50,\delta{=}4$)
& $22.64 \pm 0.44$
& $23.41 \pm 0.63$
& $21.86 \pm 0.70$
& $22.31 \pm 0.35$
& $21.95 \pm 0.60$
& $22.12 \pm 1.13$
& $22.16 \pm 0.60$
& $22.26 \pm 1.28$
& $13.02 \pm 0.94$
& $12.87 \pm 0.80$
& $10.71 \pm 0.63$
\\

\midrule

\multirow{7}{*}{\shortstack[l]{Qwen2.5-3B-\\Instruct}}
& KGW ($\delta{=}4$)
& $51.48 \pm 0.99$
& $50.02 \pm 1.23$
& $54.92 \pm 0.88$
& $51.91 \pm 0.48$
& $51.75 \pm 0.56$
& $52.26 \pm 0.68$
& $49.14 \pm 0.75$
& $50.47 \pm 0.50$
& $18.28 \pm 0.75$
& $18.92 \pm 0.62$
& $17.18 \pm 0.45$
\\
& KGW ($\delta{=}2$)
& $22.14 \pm 0.93$
& $20.88 \pm 0.83$
& $23.70 \pm 0.51$
& $21.81 \pm 0.59$
& $22.32 \pm 1.23$
& $22.14 \pm 0.54$
& $21.69 \pm 0.26$
& $21.26 \pm 0.38$
& $11.34 \pm 0.90$
& $8.66 \pm 0.61$
& $11.00 \pm 0.57$
\\
& SWEET
& $12.24 \pm 0.48$
& $12.70 \pm 1.09$
& $11.02 \pm 0.51$
& $12.72 \pm 0.71$
& $11.38 \pm 0.73$
& $12.45 \pm 1.55$
& $12.45 \pm 0.42$
& $12.40 \pm 0.54$
& $6.37 \pm 0.87$
& $6.83 \pm 0.73$
& $8.07 \pm 0.42$
\\
& MorphMark
& $8.82 \pm 0.94$
& $8.81 \pm 0.54$
& $10.86 \pm 0.51$
& $9.45 \pm 0.82$
& $9.54 \pm 0.66$
& $9.66 \pm 0.33$
& $9.11 \pm 0.70$
& $9.47 \pm 0.43$
& $4.38 \pm 0.63$
& $5.61 \pm 0.89$
& $4.57 \pm 0.60$
\\
& KGW-TK
& $13.24 \pm 0.41$
& $13.03 \pm 1.00$
& $14.38 \pm 1.17$
& $13.98 \pm 0.62$
& $13.76 \pm 0.54$
& $14.15 \pm 0.77$
& $12.57 \pm 0.58$
& $13.33 \pm 1.22$
& $7.81 \pm 0.49$
& $7.52 \pm 1.23$
& $7.57 \pm 0.74$
\\
& KGW-NT ($q{=}30,\delta{=}4$)
& $14.14 \pm 1.37$
& $15.10 \pm 0.31$
& $12.92 \pm 0.44$
& $14.03 \pm 0.20$
& $13.13 \pm 1.17$
& $13.98 \pm 0.69$
& $13.16 \pm 0.51$
& $13.10 \pm 0.50$
& $7.43 \pm 0.99$
& $6.66 \pm 0.95$
& $7.65 \pm 0.85$
\\
& KGW-NT ($q{=}50,\delta{=}4$)
& $19.80 \pm 0.82$
& $20.79 \pm 0.60$
& $18.57 \pm 1.02$
& $19.93 \pm 0.72$
& $18.16 \pm 0.93$
& $19.76 \pm 0.38$
& $18.32 \pm 1.26$
& $19.14 \pm 0.92$
& $10.45 \pm 0.72$
& $9.31 \pm 1.21$
& $10.92 \pm 0.75$
\\

\midrule

\multirow{7}{*}{\shortstack[l]{Llama-3.1-8B-\\Instruct}}
& KGW ($\delta{=}4$)
& $102.39 \pm 1.27$
& $94.11 \pm 0.49$
& $103.95 \pm 0.97$
& $102.72 \pm 1.90$
& $104.27 \pm 1.65$
& $104.46 \pm 0.56$
& $106.17 \pm 1.45$
& $98.01 \pm 1.27$
& $47.84 \pm 1.40$
& $46.03 \pm 1.24$
& $10.72 \pm 0.43$
\\
& KGW ($\delta{=}2$)
& $31.11 \pm 1.06$
& $29.18 \pm 1.26$
& $38.42 \pm 1.44$
& $31.39 \pm 1.28$
& $33.07 \pm 0.54$
& $30.84 \pm 1.07$
& $32.17 \pm 0.59$
& $29.59 \pm 0.41$
& $24.08 \pm 0.35$
& $22.36 \pm 0.81$
& $8.05 \pm 0.58$
\\
& SWEET
& $15.88 \pm 1.20$
& $16.89 \pm 0.97$
& $15.71 \pm 1.53$
& $15.97 \pm 0.79$
& $15.71 \pm 0.68$
& $16.55 \pm 1.03$
& $16.19 \pm 0.99$
& $14.90 \pm 1.70$
& $9.82 \pm 0.50$
& $10.46 \pm 0.95$
& $6.57 \pm 0.39$
\\
& MorphMark
& $20.19 \pm 0.53$
& $18.89 \pm 0.55$
& $24.98 \pm 1.55$
& $19.40 \pm 0.83$
& $21.05 \pm 0.74$
& $19.95 \pm 0.52$
& $20.75 \pm 0.64$
& $19.33 \pm 0.71$
& $19.17 \pm 0.94$
& $16.18 \pm 0.34$
& $6.38 \pm 0.36$
\\
& KGW-TK
& $17.15 \pm 0.45$
& $16.26 \pm 0.51$
& $19.25 \pm 0.60$
& $17.19 \pm 0.36$
& $17.93 \pm 0.63$
& $17.53 \pm 0.73$
& $17.13 \pm 0.27$
& $16.03 \pm 0.56$
& $14.59 \pm 1.13$
& $13.77 \pm 0.79$
& $5.55 \pm 0.36$
\\
& KGW-NT ($q{=}30,\delta{=}4$)
& $17.30 \pm 1.33$
& $19.19 \pm 0.89$
& $16.12 \pm 0.95$
& $17.32 \pm 0.74$
& $17.11 \pm 1.35$
& $17.76 \pm 1.15$
& $17.90 \pm 0.94$
& $16.86 \pm 0.59$
& $12.70 \pm 1.49$
& $11.03 \pm 0.70$
& $5.36 \pm 0.27$
\\
& KGW-NT ($q{=}50,\delta{=}4$)
& $31.77 \pm 0.66$
& $33.18 \pm 0.89$
& $31.40 \pm 0.74$
& $31.94 \pm 0.56$
& $32.97 \pm 1.52$
& $31.60 \pm 0.35$
& $32.96 \pm 0.71$
& $30.58 \pm 0.82$
& $23.48 \pm 1.07$
& $20.81 \pm 0.78$
& $10.05 \pm 0.45$
\\

\midrule

\multirow{7}{*}{Gemma-2-9B-it}
& KGW ($\delta{=}4$)
& $88.84 \pm 0.67$
& $85.55 \pm 1.09$
& $99.80 \pm 1.88$
& $88.77 \pm 0.96$
& $90.32 \pm 0.81$
& $90.99 \pm 1.04$
& $92.45 \pm 0.18$
& $81.95 \pm 1.51$
& $73.77 \pm 0.72$
& $65.65 \pm 0.60$
& $14.34 \pm 0.59$
\\
& KGW ($\delta{=}2$)
& $31.50 \pm 0.52$
& $31.95 \pm 0.65$
& $34.99 \pm 0.67$
& $32.43 \pm 0.51$
& $32.35 \pm 0.97$
& $32.22 \pm 0.53$
& $32.20 \pm 1.22$
& $29.36 \pm 0.55$
& $30.22 \pm 0.89$
& $27.48 \pm 1.13$
& $12.67 \pm 0.51$
\\
& SWEET
& $14.13 \pm 0.56$
& $15.27 \pm 0.70$
& $15.10 \pm 0.61$
& $14.74 \pm 0.90$
& $14.95 \pm 1.25$
& $15.42 \pm 1.05$
& $14.96 \pm 1.02$
& $13.70 \pm 0.90$
& $12.52 \pm 1.02$
& $11.32 \pm 0.59$
& $7.07 \pm 0.78$
\\
& MorphMark
& $13.05 \pm 0.57$
& $11.76 \pm 0.78$
& $13.77 \pm 0.71$
& $12.40 \pm 0.55$
& $12.99 \pm 0.57$
& $11.71 \pm 0.41$
& $12.40 \pm 0.36$
& $10.74 \pm 0.41$
& $13.04 \pm 0.58$
& $11.72 \pm 1.11$
& $6.58 \pm 0.21$
\\
& KGW-TK ($\delta{=}2$)
& $11.83 \pm 0.53$
& $11.18 \pm 0.42$
& $11.95 \pm 1.14$
& $12.41 \pm 1.14$
& $12.05 \pm 0.68$
& $12.42 \pm 0.64$
& $11.97 \pm 0.39$
& $11.35 \pm 0.64$
& $11.02 \pm 0.65$
& $9.84 \pm 0.67$
& $6.03 \pm 0.75$
\\
& KGW-NT ($q{=}30,\delta{=}4$)
& $16.08 \pm 0.68$
& $17.50 \pm 1.40$
& $15.08 \pm 0.82$
& $15.70 \pm 0.83$
& $15.35 \pm 1.24$
& $15.55 \pm 0.63$
& $15.97 \pm 0.70$
& $14.68 \pm 0.70$
& $13.87 \pm 1.09$
& $11.33 \pm 0.22$
& $6.60 \pm 0.83$
\\
& KGW-NT ($q{=}50,\delta{=}4$)
& $31.47 \pm 0.72$
& $32.90 \pm 0.42$
& $30.23 \pm 0.45$
& $31.24 \pm 0.81$
& $31.86 \pm 1.21$
& $31.84 \pm 1.27$
& $32.22 \pm 1.02$
& $30.05 \pm 0.65$
& $25.57 \pm 0.45$
& $24.30 \pm 0.88$
& $12.16 \pm 0.66$
\\

\bottomrule
\end{tabular}
\label{appendix:mean-std}
}
\end{table*}


\clearpage

\section{Prompt for Text Quality Judgment}
\label{appendix:llm_judge}
We use GPT-5 as an LLM judge to assess text quality in open-ended generation, focusing specifically on linguistic degradation. The rubric considers grammatical errors, disfluency, malformed words, and excessive repetition, rather than broader task-level properties such as creativity. We do not score creativity separately, since it is highly prompt-dependent and may reflect stylistic variation rather than watermark-induced linguistic degradation.

\begin{tcolorbox}[
    colback=gray!5,
    colframe=black,
    title={LLM-as-a-Judge Prompt}
]
You are an expert evaluator of French-language generation quality.
You will receive a single model-generated text intended to be written in
French. Your task is to evaluate the overall quality of the generated text.

\medskip
\textbf{Evaluate:}
\begin{itemize}
    \item grammatical correctness (agreement, conjugation, syntax, word order)
    \item fluency and naturalness
    \item coherence and readability
    \item vocabulary and idiomatic usage
    \item absence of non-French, malformed/invented words, garbled text, or excessive repetition
\end{itemize}

The text was generated with a fixed token budget and may therefore end
mid-sentence. If the final sentence is unfinished because of truncation,
ignore it and do not treat the truncation itself as a quality defect.

\medskip
\textbf{Score the text from 1 to 10:}
\begin{itemize}
    \item[1--2:] very poor
    \item[3--4:] poor
    \item[5--6:] fair
    \item[7--8:] good
    \item[9--10:] excellent
\end{itemize}

Use the full range and do not default to the middle of each band.

\medskip
\textbf{Return JSON:}

\texttt{\{"score": <integer>, "reason": "brief English 1-sentence explanation"\}}
\end{tcolorbox}

\end{document}